\theoremstyle{plain}
\newtheorem{theorem}[equation]{Theorem}
\newtheorem{corollary}[equation]{Corollary}
\theoremstyle{remark}
\newtheorem{observation}{Observation}
\newtheorem{case}{Case}
\title{Proper circular arc graphs as intersection graphs of paths on a grid}
\author[1]{Esther Galby}
\author[2]{Mar\'\i a P\'\i a Mazzoleni}
\author[1]{Bernard Ries}
\affil[1]{\small University of Fribourg, Department of Informatics, Decision Support $\&$ Operations Research,
Fribourg, Switzerland, esther.galby@unifr.ch, bernard.ries@unifr.ch}
\affil[2]{\small Universidad Nacional de La Plata. Facultad de Ciencias Exactas.
Departamento de Matem\'atica. La Plata, Argentina. / CONICET. pia@mate.unlp.edu.ar}
\date{}
\begin{document}
\maketitle
\setcounter{footnote}{0}
\parindent=0cm

\begin{abstract}
In this paper we present a characterisation, by an infinite family of minimal forbidden induced subgraphs, of proper circular arc graphs which are intersection graphs of paths on a grid, where each path has at most one bend (turn).
\end{abstract}

{\small \textbf{Keywords:}
Forbidden induced subgraph, intersection graph, $B_k$-EPG graph, proper circular arc graph.}

%==================================================================

\section{Introduction}

Intersection graphs of geometrical objects in the plane are among the most studied graph classes and have applications in various domains such as for instance biology, statistics, psychology and computing (see \cite{McMc}). We define the \textit{intersection graph} $G$ of a family $\mathcal{F}$ of non empty sets as the graph whose vertices correspond to the elements of $\mathcal{F}$, and two vertices are adjacent in $G$ if and only if the corresponding elements in $\mathcal{F}$ have a non-empty intersection. 

Golumbic et al. introduced in \cite{Golumbic} the class of \textit{edge intersection graphs of paths on a grid} (\textit{EPG graphs}), i.e. graphs for which there exists a collection of nontrivial paths on a rectangular grid in one-to-one correspondance with their vertex set, such that two vertices are adjacent if and only if the corresponding paths share at least one edge of the grid, and showed that every graph is in fact an EPG graph. A natural restriction which was thereupon considered, suggests to limit the number of \textit{bends} (i.e. 90 degrees turns at a grid-point) that a path may have; for $k \geq 0$, the class \textit{$B_k$-EPG} consists of those EPG graphs admitting a representation in which each path has at most $k$ bends. 

Since their introduction, $B_k$-EPG graphs have been extensively studied from several points of view (see for instance \cite{NCA,Ries,biedl,cohen1,francis,Golumbic,heldt1,heldt2,pergel,BRies}). One major interest is the so-called \textit{bend number}; for a graph class $\mathcal{G}$, the \textit{bend number} of $\mathcal{G}$ is the minimum integer $k\geq 0$ such that every graph $G\in \mathcal{G}$ is a $B_k$-EPG graph. The problem of determining the bend number of graph classes has been widely investigated (see for instance \cite{biedl,francis,Golumbic,heldt1} for planar graphs, Halin graphs, line graphs, outerplanar graphs). 
%Golumbic et al. showed in \cite{Golumbic} that the bend number of trees is 1. In \cite{biedl}, it is shown that planar bipartite graphs have bend number 2 and that every planar graph is a $B_5$-EPG graph. The latter result was then improved in \cite{heldt1}, where the authors prove that every planar graph is a $B_4$-EPG graph (and also present an example of a planar graph for which at least one path requires 3 bends). In \cite{biedl}, it is shown that the bend number of outerplanar graphs is either 2 or 3, and the authors conjectured that the right answer is 2 (the conjecture was later proven by Heldt et al. in \cite{heldt1}). Furthermore, it was shown in \cite{biedl} that every line graph is a $B_2$-EPG graph; thus, the bend number of line graphs is either 1 or 2 (actually it can be shown that it is 2 using a result from \cite{bonomo}). Finally, in \cite{francis}, it is shown that the bend number of Halin graphs is 2. 

Since $B_0$-EPG graphs are equivalent to the well-studied class of interval graphs, a particular attention has been paid to $B_1$-EPG graphs. The authors in \cite{heldt2} showed that recognising $B_1$-EPG graphs is an NP-complete problem, a result which was further extended to $B_2$-EPG graphs in \cite{pergel}. Therefore, special graph classes were considered. For instance, the authors in \cite{Ries} provided characterisations of some subclasses of chordal graphs which are $B_1$-EPG by families of minimal forbidden induced subgraphs; in \cite{cohen1}, the authors presented a characterisation of cographs that are $B_1$-EPG and provided a linear time recognition algorithm.

In this paper, we are interested in a subclass of circular arc graphs (CA for short), namely \textit{proper circular arc graphs}. In \cite{NCA}, the authors showed that CA graphs are $B_3$-EPG and further proved that normal circular arc graphs have bend number equal to 2, a result from which we can easily deduce that the bend number of proper circular arc graphs is 2 (see Section \ref{sec:prelim}). They also considered additional constraints on the EPG representations by demanding that the union of the paths lies on the boundary of a rectangle of the grid (\textit{EPR graphs}). Similarly to EPG graphs, they defined for $k \geq 0$ the class \textit{$B_k$-EPR} and proved that not all circular arc graphs are $B_3$-EPR (it is easily seen that CA = $B_4$-EPR = EPR). With the intent of pursuing the work done in \cite{NCA}, we here provide a characterisation of proper circular arc graphs that are $B_1$-EPG by a family of minimal forbidden induced subgraphs (see Section \ref{sec:proper}) which is a first step towards characterising the minimal graphs in (CA $\cap$ $B_2$-EPG) $\backslash$ (CA $\cap$ $B_1$-EPG). We conclude Section \ref{sec:proper} by noting that a characterisation by a family of minimal forbidden induced subgraphs of proper circular arc graphs which are $B_1$-EPR easily follows from \cite{NCA} and \cite{PHCA}. 
%==================================================================

\section{Preliminaries}
\label{sec:prelim}

Throughout this paper, all considered graphs are connected, finite and simple. For all graph theoretical terms and notations not defined here, we refer the reader to \cite{Bondy}.

Let $G=(V,E)$ be an undirected graph with vertex set $V$ and edge set $E$. A \textit{clique} (resp. \textit{independent set}) is a subset of vertices that are pairwise adjacent (resp. nonadjacent). If $X_1$ and $X_2$ are two disjoint subsets of vertices, we say that $X_1$ \textit{is complete to} (resp. \textit{is anti-complete to}) $X_2$, which we denote by $X_1 - X_2$ (resp. $X_1 \cdots X_2$), if every vertex in $X_1$ is adjacent (resp. nonadjacent) to every vertex in $X_2$. A \textit{dominating set} $D$ in $G$ is a subset of vertices such that every vertex not in $D$ is adjacent to at least one vertex in $D$.

We denote by $C_n$, $n\geq 3$, the \textit{chordless cycle} on $n$ vertices and by $K_n$, $n \geq 1$, the \textit{complete graph} on $n$ vertices. A \textit{k-wheel}, $k \geq 3$, denoted by $W_k$, is a chordless cycle on $k$ vertices with an additional vertex, referred to as the \textit{center} of the wheel, adjacent to every vertex of the cycle. The \textit{3-sun}, denoted by $S_3$, consists of an independent set $S=\{s_0,s_1,s_2\}$ and a clique $K=\{k_0,k_1,k_2\}$ such that $s_i$ is adjacent to $k_i$ and $k_{i+1}$, $i=0,1,2$, where indices are taken modulo 3. Given a graph $G$ and an integer $k \geq 0$, the \textit{power graph} $G^k$ of $G$ has the same vertex set as $G$ with two vertices being adjacent in $G^k$ if and only if their distance (i.e. the length of a shortest path between the two vertices) in $G$ is at most $k$. 

If $G=(V,E)$ is a graph and $X \subseteq V$ is a subset of vertices, we denote by $G\backslash X$ the graph obtained from $G$ by deleting all vertices in $X$. Equivalently, $G\backslash X$ is the \textit{subgraph of $G$ induced by $V\backslash X$}, denoted by $G[V\backslash X]$. If $X$ consists of a single vertex, say $X=\{x\}$, we simply write $G\backslash x$. The \textit{complement graph} of $G$ is the graph $\overline{G}$ having the same vertex set as $G$ with two vertices being adjacent in $\overline{G}$ if and only if they are nonadjacent in $G$. The \emph{disjoint union of $G_1$ and $G_2$} is denoted by $G_1 \cup G_2$. 

Let $\mathcal H$ be a collection of graphs. For $H\in\mathcal H$, we say that $G$ \emph{contains no induced $H$} if $G$ contains no induced subgraph isomorphic to $H$. A graph is \emph{$\mathcal H$-free} if it contains no induced subgraph isomorphic to some graph belonging to $\mathcal H$.
%A graph belonging to $\mathcal H$ is called an $\mathcal H$-graph. For $H\in\mathcal H$, we say that $G$ \emph{contains no induced $H$} if $G$ contains no induced subgraph isomorphic to $H$. A graph is \emph{$\mathcal H$-free} if it contains no induced subgraph isomorphic to some graph belonging to $\mathcal H$. If $H\in\mathcal H$ implies that every induced subgraph of $H$ is an $\mathcal H$-graph, $\mathcal H$ is said to be \emph{hereditary}. If $\mathcal H$ is a hereditary class, a graph $G$ is a \emph{minimal forbidden induced subgraph of $\mathcal H$}, or more briefly, \emph{minimally non-$\mathcal H$}, if $G$ does not belong to $\mathcal H$ but every proper induced subgraph of $G$ is an $\mathcal H$-graph.

Recall that an \textit{interval graph} is an intersection graph of intervals on the real line. A graph is said to be \textit{chordal} if it does not contain any chordless cycle of length at least four as an induced subgraph. An independent set of three vertices such that each pair is joined by a path that avoids the neighborhood of the third is called an \textit{asteroidal triple}. The following is a well-known characterisation of interval graphs.

\begin{theorem}[\cite{Lekker}]
\label{theo:interval}
A graph is an interval graph if and only if it is chordal and contains no asteroidal triple.
\end{theorem}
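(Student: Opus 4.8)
The plan is to prove the two implications separately: the implication ``interval $\Rightarrow$ chordal and asteroidal-triple-free'' is elementary and I would argue it directly from an interval representation, while the converse is the substantial part and I would derive it from the clique-tree structure of chordal graphs.

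For the easy direction, fix an interval representation $v\mapsto I_v=[\ell_v,r_v]$. If $G$ contained an induced cycle $v_1v_2\cdots v_n$ with $n\geq 4$, I would pick a cycle vertex $v_1$ minimising $r_{v_1}$; since $I_{v_1}$ meets both $I_{v_2}$ and $I_{v_n}$ and $r_{v_1}$ is the smallest right endpoint among the cycle vertices, the point $r_{v_1}$ lies in $I_{v_2}\cap I_{v_n}$, forcing $v_2v_n\in E$ and contradicting $n\geq 4$; hence $G$ is chordal. For asteroidal triples, let $\{a,b,c\}$ be an independent set; then $I_a,I_b,I_c$ are pairwise disjoint, hence linearly ordered on the line, say with $I_b$ strictly between $I_a$ and $I_c$. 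Along any $a$--$c$ path consecutive intervals intersect, so the union of the intervals of the path vertices is an interval containing a point to the left of $I_b$ (e.g.\ $r_a$) and a point to its right (e.g.\ $\ell_c$), hence containing all of $I_b$; therefore some path vertex lies in $N[b]$, so no independent triple is asteroidal.

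For the hard direction I would invoke two classical facts about a chordal graph $G$: (i) $G$ has a \emph{clique tree} $T$, i.e.\ a tree on the maximal cliques of $G$ such that for each vertex $v$ the maximal cliques containing $v$ induce a subtree $T_v$ of $T$; and (ii) $G$ is an interval graph exactly when it has a clique tree that is a path (order the cliques along the path and read off intervals, the sets $T_v$ becoming subpaths). So assume $G$ is chordal but not an interval graph, fix a clique tree $T$ (necessarily not a path) and a clique $K$ with $\deg_T(K)\geq 3$; pick neighbours $K_1,K_2,K_3$ of $K$ in three distinct components $T^1,T^2,T^3$ of $T-K$, and set $S_i=K\cap K_i$, a nonempty minimal separator properly contained in both $K$ and $K_i$. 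I would choose a vertex $v_i$ suitably ``deep'' in branch $T^i$ together with a connector $s_i\in S_i$, and claim $\{v_1,v_2,v_3\}$ is an asteroidal triple: they are pairwise non-adjacent since a common clique of $v_i,v_j$ would lie in $T_{v_i}\cap T_{v_j}\subseteq T^i\cap T^j=\varnothing$, while the path through $v_i,s_i,s_j,v_j$ (using $s_i\sim v_i$, $s_i\sim s_j$ inside the clique $K$, and $s_j\sim v_j$) connects $v_i$ and $v_j$ and ought to avoid $N[v_k]$, contradicting the hypothesis.

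The step I expect to be the main obstacle is making that last claim rigorous: one must choose the $v_i$ and the connectors $s_i$ so that the connecting paths genuinely miss the closed neighbourhood of the third vertex. The danger is that the entire separator $S_i$ could be adjacent to $v_k$, in which case \emph{no} path leaving $T^i$ from $v_i$ avoids $N[v_k]$; handling this requires pushing the $v_i$ into leaf cliques of the branches and choosing the $s_i$ outside the separators lying between $K$ and the branch containing $v_k$, and it is precisely here that the absence of asteroidal triples is used — were such choices impossible, one could still locate an asteroidal triple, possibly elsewhere in $G$. A fallback route, which I would keep in reserve, is induction on $|V(G)|$: delete a suitable simplicial vertex $v$, apply the inductive hypothesis to $G-v$ (still chordal and asteroidal-triple-free, as both properties are hereditary), and reinsert $v$ as a tiny interval at a common point of the intervals of $N(v)$ covered by no other interval; the delicate point is again the existence of such a point, which is where ``no asteroidal triple'' re-enters.
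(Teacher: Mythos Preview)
The paper does not prove this theorem: it is quoted as a classical result of Lekkerkerker and Boland \cite{Lekker} and used as a black box (in the very last paragraph of the proof of Theorem~\ref{thmPCA}, to conclude that a chordal graph with no independent triple is an interval graph). There is therefore no ``paper's own proof'' to compare against.

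Regarding your sketch itself: the forward direction is fine. For the converse, your clique-tree plan is the right skeleton, but the part you flag as ``the main obstacle'' is exactly where the argument is incomplete. Picking $v_i$ adjacent to some $s_i\in S_i$ does not work in general: nothing prevents $S_i\subseteq N[v_k]$ (indeed $S_i\subseteq K$ and $v_k$ may be adjacent to all of $S_i$), and then every $v_i$--$v_j$ path through $K$ hits $N[v_k]$. The standard fix is not to stay at the branching clique $K$: one pushes each $v_i$ to a leaf clique of $T$ lying in $T^i$ and chooses it in that leaf clique but outside the adjacent separator, then routes the $v_i$--$v_j$ path along the unique $T$-path between their leaf cliques, arguing that this path stays inside cliques whose intersection with $T_{v_k}$ is empty. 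Alternatively, Lekkerkerker and Boland's original proof bypasses clique trees entirely and proceeds via their explicit list of minimal forbidden induced subgraphs, each of which visibly contains either a long induced cycle or an asteroidal triple. Either route can be made to work; as written, your hard direction is a plan with the crucial step still to be carried out.
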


A \textit{circular arc graph} (\textit{CA graph}) is an intersection graph of open arcs on a circle, i.e. a graph $G=(V,E)$ is a circular arc graph if one can associate an open arc on a circle with each vertex such that two vertices are adjacent if and only if their corresponding arcs intersect. If $\mathcal{C}$ denotes the corresponding circle and $\mathcal{A}$ the corresponding set of arcs, then $\mathcal{R} = (\mathcal{C}, \mathcal{A})$ is called a \textit{circular arc representation} of $G$. A circular arc graph having a circular arc representation where no two arcs cover the circle is called a \textit{normal circular arc graph} (\textit{NCA graph}). A circular arc graph having a circular arc representation where no arc properly contains another is called a \textit{proper circular arc graph} (\textit{PCA graph}). It is well known that every PCA graph admits a representation which is simultaneously proper and normal (see \cite{Tucker}); in particular, every PCA graph is a NCA graph. The following theorem provides a minimal forbidden induced subgraph characterisation for PCA graphs (see Fig. \ref{Fig:PCA}).

\begin{theorem}[\cite{PCA}]
\label{PCA}
A graph is a PCA graph if and only if it is $\{G_i, C_{n+4} \cup K_1, \overline{C_{2n + 3} \cup K_1}, \overline{C_{2n+6}}, 1 \leq i \leq 6, n \geq 0\}$-free.
\end{theorem}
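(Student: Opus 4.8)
The plan is to prove the two implications of the equivalence separately, and to base everything on the reformulation (going back to the structural analysis of Tucker, see \cite{Tucker}) that a graph $G$ is a PCA graph if and only if it admits a \emph{round enumeration}: a cyclic ordering $v_1,\dots,v_n$ of $V(G)$ in which, for every $i$, the closed neighbourhood $N[v_i]$ is a set of circularly consecutive vertices. One direction of this reformulation is immediate from a representation that is simultaneously proper and normal (the existence of which is recalled in the excerpt): order the vertices along the circle by the clockwise endpoints of their arcs and check, exactly as in the line case for proper interval graphs, that closed neighbourhoods become arcs in this order; the converse is a routine construction of arcs from the enumeration.

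For \emph{necessity} I would show that none of the listed graphs admits a round enumeration; since induced subgraphs of PCA graphs are PCA, this suffices. The six sporadic graphs $G_1,\dots,G_6$ are finite, so each is a bounded case check — essentially, a vertex whose closed neighbourhood must be "almost everything" in the cyclic order cannot coexist with too many of its non-neighbours lying far apart. For the family $C_{n+4}\cup K_1$ the key observation is that a chordless cycle $C_m$ with $m\ge 4$ forces its round enumeration to be, up to reflection, its own cyclic vertex order, so that $N[v_i]=\{v_{i-1},v_i,v_{i+1}\}$ for every cycle vertex; inserting any further vertex non-adjacent to all of $C_m$ then destroys the consecutivity of one such triple. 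For $\overline{C_{2n+3}\cup K_1}$ one notes that $n=0$ gives the claw $K_{1,3}$ and that for $n\ge 1$ the graph is $\overline{C_{2n+3}}$ together with a universal vertex, which is excluded either directly or by locating inside it an induced claw (or an induced member of an already-handled family); $\overline{C_{2n+6}}$ is treated analogously.

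The \emph{sufficiency} direction is the substantial part. Assuming $G$ is $\{G_i,\ C_{n+4}\cup K_1,\ \overline{C_{2n+3}\cup K_1},\ \overline{C_{2n+6}}\}$-free, I would build a round enumeration by a case analysis organised around whether $G$ contains an induced cycle $C_m$ with $m\ge 4$. If it does not, then $G$ is chordal and claw-free, and since the relevant chordal obstructions to being a proper interval graph — notably the $3$-sun $S_3$ — are among $G_1,\dots,G_6$, Theorem \ref{theo:interval} together with claw-freeness gives that $G$ is a proper interval graph, hence PCA (and the disconnected case is subsumed here, since $C_{n+4}\cup K_1$-freeness forces a disconnected $G$ to be chordal). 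If instead $G$ contains an induced cycle $C:=v_1\cdots v_m$ with $m\ge 4$, then $C_{n+4}\cup K_1$-freeness forces $C$ to be a \emph{dominating} induced cycle; taking $C$ as the backbone of the cyclic order, one inserts each vertex $v\notin C$ between consecutive cycle vertices according to its neighbourhood on $C$ — which, using claw-freeness and the remaining obstructions, should be a contiguous arc of $C$ — and then verifies that all closed neighbourhoods remain circularly consecutive.

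The main obstacle is precisely this last step: showing that freeness from $G_1,\dots,G_6$ and from the two complement families $\overline{C_{2n+6}}$ and $\overline{C_{2n+3}\cup K_1}$ is enough to guarantee a \emph{consistent} insertion — controlling simultaneously the chords incident to the backbone cycle and the mutual adjacencies of two off-cycle vertices inserted near the same place, so that no two requirements "$N[\cdot]$ is an arc" conflict. This is where the six sporadic graphs are used and where the bulk of a careful, somewhat lengthy structural case analysis goes. An alternative route would be to invoke Tucker's matrix characterisation ($G$ is PCA iff the augmented adjacency matrix $A(G)+I$ has the circular-ones property) and translate the known forbidden-submatrix list for that property back into forbidden induced subgraphs, but this translation carries essentially the same combinatorial burden.
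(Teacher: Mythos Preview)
The paper does not prove this statement at all: Theorem~\ref{PCA} is quoted from Tucker~\cite{PCA} as background and is used as a black box throughout (e.g.\ to invoke claw-freeness of PCA graphs). There is therefore nothing in the paper to compare your proposal against.

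As to the proposal itself, the overall architecture via round enumerations is the standard route and is sound in outline, but two points deserve care. First, in your necessity argument for $\overline{C_{2n+3}\cup K_1}$ you suggest finding an induced claw or another already-handled obstruction inside it; this is fine for showing the graph is not PCA, but note that the listed family is meant to consist of \emph{minimal} obstructions, so for $n\ge 1$ these graphs do \emph{not} contain an induced claw (for instance $\overline{C_5\cup K_1}=W_5$ is claw-free), and you must argue directly that no round enumeration exists. Second, you correctly flag that the sufficiency direction is where the real work lies: the insertion of off-cycle vertices into the backbone order and the verification that mutual adjacencies among inserted vertices remain consistent is exactly the delicate structural analysis Tucker carries out, and your sketch does not yet supply the mechanism (which of $G_1,\dots,G_6$ blocks which inconsistency) that makes this go through. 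Citing the circular-ones route via \cite{Tucker} is legitimate as an alternative, but then you are essentially deferring the proof rather than giving one.
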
 

A graph $G$ is a \textit{Helly circular arc graph} (\textit{HCA graph}) if it has a circular arc representation in which any subset of pairwise intersecting arcs has a common point on the circle. A graph that admits a circular arc representation which is simultaneously normal and Helly, i.e. no three arcs or less cover the circle, is called a \textit{normal Helly circular arc graph} (\textit{NHCA graph}). Similarly, one can define the class of \textit{proper Helly circular arc graphs} (\textit{PHCA graphs}) corresponding to those graphs that admit a circular arc representation in which no three arcs cover the circle and no arc properly contains another. It was shown in \cite{sNHCA} that a PCA graph is PHCA if it admits a proper circular arc representation in which no two or three arcs cover the circle; in particular, every PHCA graph is a NHCA graph.

%The relation between PCA graphs and NCA graphs is given in the following theorem.
%
%\begin{theorem}[\cite{Tucker}]
%\label{theo:tucker}
%Every PCA graph is a NCA graph.
%\end{theorem}

\tikzset{
  circ/.style = {circle,draw,fill,inner sep=1pt},
  invisible/.style = {circle,draw=none,inner sep=0pt,font=\tiny}
}

\begin{center}
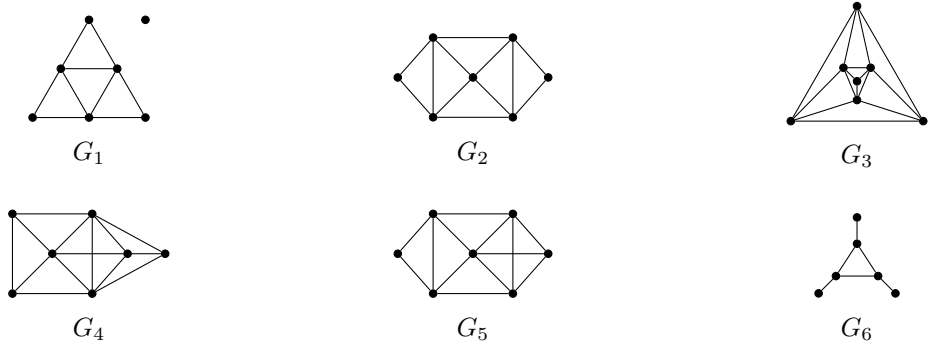
\begin{figure}
\centering
\captionsetup[subfigure]{labelformat=empty}
\begin{minipage}[b]{0.3\textwidth}
\begin{subfigure}[b]{\linewidth}
\centering
\begin{tikzpicture}[node distance=0.75cm]
\node[circ] (c) at (0,0) {};
\node[circ] (b) at (1.5,0) {};
\node[circ] (f) at (0.75,1.3) {};
\node[circ] (g) [right of=f] {};

\draw[-] (c) -- (b) node[circ,midway] (a) {};
\draw[-] (c) -- (f) node[circ,midway] (d) {};
\draw[-] (b) -- (f) node[circ,midway] (e) {};
\draw[-] (a) -- (d)
(d) -- (e)
(e) -- (a);
\end{tikzpicture}
\caption{$G_1$}
\end{subfigure}

\vspace*{5mm}

\begin{subfigure}[b]{\linewidth}
\centering
\begin{tikzpicture}[node distance=0.75cm]
\node[circ] (a) {};
\node[circ] (b) [below left of=a] {};
\node[circ] (c) [below right of=a] {};
\node[circ] (d) [above right of=a] {};
\node[circ] (e) [above left of=a] {};
\node[circ] (f) [right of=a,xshift=0.25cm] {};
\node[circ] (g) [right of=f,xshift=-0.25cm] {};

\draw (a) edge[-] (b)
(a) edge[-] (c)
(a) edge[-] (d)
(a) edge[-] (e)
(a) edge[-] (f)
(b) edge[-] (c)
(b) edge[-] (e)
(c) edge[-] (d)
(c) edge[-] (f)
(c) edge[-] (g)
(d) edge[-] (e)
(d) edge[-] (f)
(d) edge[-] (g)
(f) edge[-] (g);
\end{tikzpicture}
\caption{$G_4$}
\end{subfigure}
\end{minipage}
\begin{minipage}[b]{0.3\textwidth}
\begin{subfigure}[b]{\linewidth}
\centering
\begin{tikzpicture}[node distance=0.75cm]
\node[circ] (a) {};
\node[circ] (b) [below left of=a] {};
\node[circ] (c) [below right of=a] {};
\node[circ] (d) [above right of=a] {};
\node[circ] (e) [above left of=a] {};
\node[circ] (f) [left of=a,xshift=-0.25cm] {};
\node[circ] (g) [right of=a,xshift=0.25cm] {};

\draw (a) edge[-] (b)
(a) edge[-] (c)
(a) edge[-] (d)
(a) edge[-] (e)
(b) edge[-] (c)
(b) edge[-] (e)
(b) edge[-] (f)
(c) edge[-] (d)
(c) edge[-] (g)
(d) edge[-] (e)
(d) edge[-] (g)
(e) edge[-] (f);
\end{tikzpicture}
\caption{$G_2$}
\end{subfigure}

\vspace*{5mm}

\begin{subfigure}[b]{\linewidth}
\centering
\begin{tikzpicture}[node distance=0.75cm]
\node[circ] (a) {};
\node[circ] (b) [below left of=a] {};
\node[circ] (c) [below right of=a] {};
\node[circ] (d) [above right of=a] {};
\node[circ] (e) [above left of=a] {};
\node[circ] (f) [left of=a,xshift=-0.25cm] {};
\node[circ] (g) [right of=a,xshift=0.25cm] {};

\draw (a) edge[-] (b)
(a) edge[-] (c)
(a) edge[-] (d)
(a) edge[-] (e)
(a) edge[-] (g)
(b) edge[-] (c)
(b) edge[-] (e)
(b) edge[-] (f)
(c) edge[-] (d)
(c) edge[-] (g)
(d) edge[-] (e)
(d) edge[-] (g)
(e) edge[-] (f);
\end{tikzpicture}
\caption{$G_5$}
\end{subfigure}
\end{minipage}
\begin{minipage}[b]{0.3\textwidth}
\begin{subfigure}[b]{\linewidth}
\centering
\begin{tikzpicture}[node distance=0.75cm]
\node[circ] (a) {};
\node[circ] (b) [below of=a,yshift=0.5cm] {};
\node[circ] (c) [above right of=a,xshift=-0.35cm,yshift=-0.35cm] {};
\node[circ] (d) [above left of=a,xshift=0.35cm,yshift=-0.35cm] {};
\node[circ] (e) [below left of=b,xshift=-0.35cm,yshift=0.25cm] {};
\node[circ] (f) [below right of=b,xshift=0.35cm,yshift=0.25cm] {};
\node[circ] (g) [above of=a,yshift=0.25cm] {};

\draw (a) edge[-] (b)
(a) edge[-] (c)
(a) edge[-] (d)
(b) edge[-] (c)
(b) edge[-] (d)
(b) edge[-] (e)
(b) edge[-] (f)
(c) edge[-] (d)
(c) edge[-] (f) 
(c) edge[-] (g)
(d) edge[-] (e)
(d) edge[-] (g)
(e) edge[-] (f)
(e) edge[-] (g)
(f) edge[-] (g);
\end{tikzpicture}
\caption{$G_3$}
\end{subfigure}

\vspace*{5mm}

\begin{subfigure}[b]{\linewidth}
\centering
\begin{tikzpicture}[node distance=0.75cm]
\node[invisible] (fake) {};
\node[circ] (a) [below left of=fake,xshift=0.25cm,yshift=0.25cm] {};
\node[circ] (b) [above of=fake,yshift=-0.6cm] {};
\node[circ] (c) [below right of=fake,xshift=-0.25cm,yshift=0.25cm] {};
\node[circ] (d) [below right of=c,xshift=-0.3cm,yshift=0.3cm]  {};
\node[circ] (e) [below left of=a,xshift=0.3cm,yshift=0.3cm] {};
\node[circ] (f) [above of=b,yshift=-0.4cm] {};

\draw (a) edge[-] (b)
(a) edge[-] (c)
(a) edge[-] (e)
(b) edge[-] (c)
(b) edge[-] (f)
(c) edge[-] (d);
\end{tikzpicture}
\caption{$G_6$}
\end{subfigure}
\end{minipage}
\caption{Minimal forbidden induced subgraphs for PCA graphs.}
\label{Fig:PCA}
\end{figure}
\end{center}

Consider a rectangular grid $\mathcal{G}$ where the horizontal lines are referred to as \textit{rows} and the vertical lines as \textit{columns}. A grid-point lying on row $x$ and column $y$ is referred to as $(x,y)$. If $\mathcal{P}$ is a collection of nontrivial simple paths on the grid, the \textit{edge intersection graph $G$ of $\mathcal{P}$} is the graph whose vertex set is in one-to-one correspondance with $\mathcal{P}$ and two vertices are adjacent if and only if the corresponding paths share at least one grid-edge. The path representing some vertex $v$ will be denoted by $P_v$. Then $(\mathcal{G},\mathcal{P})$ is referred to as an \textit{EPG representation} of $G$ or a \textit{k-bend EPG representation} of $G$ if every path of $\mathcal{P}$ has at most $k$-bends (i.e. 90 degrees turns at a grid-point) with $k \geq 0$. The class of graphs admitting a $k$-bend EPG representation is called \textit{$B_k$-EPG}. 

A graph $G$ is said to be an \textit{edge intersection graph of paths on a rectangle} (\textit{EPR graph}) if there exists a set of paths $\mathcal{P}$ on a rectangle $\mathcal{R}$ of the grid in one-to-one correspondance with the vertex set of $G$, where two vertices are adjacent in $G$ if and only if their corresponding paths share at least one grid-edge; $(\mathcal{G},\mathcal{R},\mathcal{P})$ is then referred to as an \textit{EPR representation} of $G$. For $k \geq 0$, we denote by \textit{$B_k$-EPR} the class of graphs for which there exists an EPR representation where every path has at most $k$ bends. 

The authors in \cite{NCA} proved that NCA graphs have a bend number of 2 and presented an infinite family of NCA graphs, namely $\{C_{4k-1}^k, k \geq 2\}$, which are not $B_1$-EPG. Since any $C_{4k-1}^k, k \geq 2$ is in fact a PCA graph, we deduce the following corollary from the fact that PCA $\subset$ NCA.
%from Theorem \ref{theo:tucker}.

\begin{corollary}
PCA graphs have a bend number of 2.
\end{corollary}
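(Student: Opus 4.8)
The plan is to establish the two matching bounds. For the upper bound I would show that every PCA graph is $B_2$-EPG. By the discussion in Section~\ref{sec:prelim} (relying on \cite{Tucker}), every PCA graph admits a circular arc representation that is simultaneously proper and normal, so PCA $\subseteq$ NCA; and since the authors of \cite{NCA} prove that NCA graphs have bend number $2$, that is, that every NCA graph is $B_2$-EPG, the same conclusion holds for every PCA graph. Hence the bend number of the class of PCA graphs is at most $2$.

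For the lower bound it suffices to produce a single PCA graph that is not $B_1$-EPG. I would use the family $\{C_{4k-1}^k : k \ge 2\}$ from \cite{NCA}, where each member is shown not to be $B_1$-EPG; it then remains only to verify that $C_{4k-1}^k$ is a PCA graph for every $k \ge 2$. For this I would write down an explicit proper circular arc representation of $C_n^k$: place $n$ equally spaced points $p_0, \dots, p_{n-1}$ on a circle of circumference $n$ and assign to vertex $i$ the \emph{open} arc of length $k+1$ centred at $p_i$. All arcs are congruent, so no arc properly contains another; and as soon as $k+1 < n/2$ --- which holds for $n = 4k-1$ when $k \ge 2$ --- two of these arcs meet precisely when the cyclic distance between their centres is at most $k$, i.e.\ precisely when the corresponding vertices are adjacent in $C_n^k$. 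Thus $C_{4k-1}^k$ is PCA for every $k \ge 2$, so the bend number of PCA graphs is at least $2$.

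Combining the two bounds gives that the bend number of PCA graphs is exactly $2$. The only part that is not immediate from the cited results is the claim that the graphs $C_{4k-1}^k$ lie in PCA, and I expect this to be the (mild) main point of the argument; it is routine with the representation above, the single subtlety being to keep the common arc length strictly below half the circumference so that no spurious ``wrap-around'' intersections occur --- which is exactly the reason the family is taken with $k \ge 2$ (for $k=1$ one has $C_3^1 = K_3$, which is PCA but is $B_1$-EPG, hence not a valid witness).
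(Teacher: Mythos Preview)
Your proposal is correct and follows essentially the same approach as the paper, which deduces the corollary directly from the inclusion PCA $\subset$ NCA together with the results of \cite{NCA} on the bend number of NCA and on the family $\{C_{4k-1}^k\}_{k\ge 2}$. The only difference is that you explicitly verify that each $C_{4k-1}^k$ is PCA via a concrete proper representation, whereas the paper simply asserts this fact; your verification (including the $k+1 < n/2$ check to rule out wrap-around intersections) is correct and fills in that detail.
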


%==================================================================

\section{Proper circular arc $B_1$-EPG graphs}
\label{sec:proper}

As we have seen in Section \ref{sec:prelim}, the bend number of proper circular arc graphs is 2. In this section, we provide a characterisation, by a family of minimal forbidden induced subgraphs (see Fig. \ref{PCAB1}), for PCA graphs which are $B_1$-EPG.

\tikzset{
  circ/.style = {circle,draw,fill,inner sep=1pt},
  nonedge/.style={decorate,decoration={snake,amplitude=.3mm,segment length=1mm},draw},
  clique/.style = {circle,draw,inner sep=1pt,font=\tiny},
  invisible/.style = {circle,draw=none,inner sep=0pt,font=\tiny}
}

\begin{center}
\begin{figure}[h]
\captionsetup[subfigure]{labelformat=empty}
\begin{minipage}[b]{0.24\textwidth}
\begin{subfigure}[b]{\linewidth}
\centering
\begin{tikzpicture}[node distance=0.75cm]
\node[circ] (a) {};
\node[circ] (b) [below left of=a] {};
\node[circ] (c) [below right of=a] {};
\node[circ] (d) [above right of=a] {};
\node[circ] (e) [above left of=a] {};
\node[circ] (f) [above right of=d] {};
\node[circle,draw=none,inner sep=1pt] (fake) [below left of=b] {};

\draw (a) edge[-] (b)
(a) edge[-] (c)
(a) edge[-] (d)
(a) edge[-] (e)
%(a) edge[dashed,bend left=20] (f)
(b) edge[-] (c)
(b) edge[-] (e)
(c) edge[-] (d)
(c) edge[-,bend right=10] (f)
(d) edge[-] (f)
(d) edge[-] (e)
(e) edge[-,bend left=10] (f);
\end{tikzpicture}
\caption{$H_1$}
\end{subfigure}

\vspace*{5mm}

\begin{subfigure}[b]{\linewidth}
\centering
\begin{tikzpicture}[node distance=0.75cm]
\node[circ] (a) {};
\node[circ] (b) [below left of=a] {};
\node[circ] (c) [below right of=a] {};
\node[circ] (d) [above right of=a] {};
\node[circ] (e) [above left of=a] {};
\node[circ] (f) [above right of=d] {};
\node[circ] (g) [above left of=b] {};

\draw (a) edge[-] (b)
(a) edge[-] (c)
(a) edge[-] (d)
(a) edge[-] (e)
(a) edge[-,bend right=20] (f)
(a) edge[nonedge] (g)
(b) edge[-] (c)
(b) edge[-] (e)
(b) edge[-] (g)
(c) edge[-] (d)
(c) edge[-,bend right=10] (f)
(d) edge[-] (f)
(d) edge[-] (e)
(e) edge[-,bend left=10] (f)
(e) edge[-] (g)
(f) edge[-] (g);
\end{tikzpicture}
\caption{$H_5$}
\end{subfigure}
\end{minipage}
\begin{minipage}[b]{0.24\textwidth}
\begin{subfigure}[b]{\linewidth}
\centering
\begin{tikzpicture}[node distance=0.75cm]
\node[circ] (a) {};
\node[circ] (b) [below left of=a] {};
\node[circ] (c) [below right of=a] {};
\node[circ] (d) [above right of=a] {};
\node[circ] (e) [above left of=a] {};
\node[circ] (f) [above right of=d] {};
\node[circ] (g) [below right of=c] {};

\draw (a) edge[-] (b)
(a) edge[-] (c)
(a) edge[-] (d)
(a) edge[-] (e)
(a) edge[-,bend left=20] (f)
(a) edge[-,bend right=20] (g)
(b) edge[-] (c)
(b) edge[-] (e)
(b) edge[-,bend right=10] (g)
(c) edge[-] (d)
(c) edge[-,bend right=10] (f)
(c) edge[-] (g)
(d) edge[-] (f)
(d) edge[-] (e)
(d) edge[-,bend left=10] (g)
(e) edge[-,bend left=10] (f);
%(f) edge[dashed,bend left=20] (g);
\end{tikzpicture}
\caption{$H_2$}
\end{subfigure}

\vspace*{5mm}

\begin{subfigure}[b]{\linewidth}
\centering
\begin{tikzpicture}[node distance=0.75cm]
\begin{pgfinterruptboundingbox}
\node[circle,draw=none] (a) {};
\end{pgfinterruptboundingbox}
\node[circ] (a') [left of=a,xshift=0.5cm] {};
\node[circ] (a'') [right of=a,xshift=-0.5cm] {};
\node[circ] (b) [below left of=a] {};
\node[circ] (c) [below right of=a] {};
\node[circ] (d) [above right of=a] {};
\node[circ] (e) [above left of=a] {};
\node[circ] (f) [above left of=d] {};
\node[circ] (g) [above left of=b] {};

\draw (a') edge[-] (b)
(a') edge[-] (c)
(a') edge[-] (d)
(a') edge[-] (e)
(a') edge[-] (g)
(a') edge[-] (a'')
(a'') edge[-] (b)
(a'') edge[-] (c)
(a'') edge[-] (d)
(a'') edge[-] (e)
(a'') edge[-] (f)
(b) edge[-] (c)
(b) edge[-] (e)
(b) edge[-] (g)
(c) edge[-] (d)
(d) edge[-] (f)
(d) edge[-] (e)
(e) edge[-] (f)
(e) edge[-] (g);
\end{tikzpicture}
\caption{$H_6$}
\end{subfigure}
\end{minipage}
\begin{minipage}[b]{0.24\textwidth}
\begin{subfigure}[b]{\linewidth}
\centering
\begin{tikzpicture}[node distance=0.75cm]
\node[circ] (a) {};
\node[circ] (a') [left of=a,xshift=0.5cm] {};
\node[circ] (a'') [right of=a, xshift=-0.5cm] {};
\node[circ] (b) [below left of=a] {};
\node[circ] (c) [below right of=a] {};
\node[circ] (d) [above right of=a] {};
\node[circ] (e) [above left of=a] {};
\node[circle,draw=none,inner sep=1pt] (fake) [below right of=b]{};

\draw (a) edge[-] (b)
(a) edge[-] (c)
(a) edge[-] (d)
(a) edge[-] (e)
(a) edge[-] (a')
(a) edge[-] (a'')
(a') edge[-] (b)
(a') edge[-] (c)
(a') edge[-] (d)
(a') edge[-] (e)
(a'') edge[-] (b)
(a'') edge[-] (c)
(a'') edge[-] (d)
(a'') edge[-] (e)
(b) edge[-] (c)
(b) edge[-] (e)
(c) edge[-] (d)
(d) edge[-] (e);
\end{tikzpicture}
\caption{$H_3$}
\end{subfigure}

\vspace*{5mm}

\begin{subfigure}[b]{\linewidth}
\centering
\begin{tikzpicture}[node distance=0.75cm]
\node[circ] (a) {};
\node[circ] (b) [below left of=a] {};
\node[circ] (c) [below right of=a] {};
\node[circ] (d) [above right of=a] {};
\node[circ] (e) [above left of=a] {};
\node[circ] (f) [above right of=d,yshift=-0.25cm,xshift=-0.25cm] {};
\node[circ] (g) [below left of=b,yshift=0.25cm,xshift=0.25cm] {};

\draw (a) edge[-] (b)
(a) edge[-] (c)
(a) edge[-] (d)
(a) edge[-] (e)
(a) edge[-,bend left=20] (f)
(a) edge[-,bend left=20] (g)
(b) edge[-] (c)
(b) edge[-] (e)
(b) edge[-] (g)
(c) edge[-] (d)
(c) edge[-,bend right=10] (f)
(c) edge[-,bend left=10] (g)
(d) edge[-] (f)
(d) edge[-] (e)
(e) edge[-,bend left=10] (f)
(e) edge[-,bend right=10] (g)
(f) edge[-,bend right] (g);
\end{tikzpicture}
\caption{$H_7$}
\end{subfigure}
\end{minipage}
\begin{minipage}[b]{0.24\textwidth}
\begin{subfigure}[b]{\linewidth}
\centering
\begin{tikzpicture}[node distance=0.75cm]
\begin{pgfinterruptboundingbox}
\node[circle,draw=none] (a) {};
\end{pgfinterruptboundingbox}
\node[circ] (a') [above of=a,yshift=-0.5cm] {};
\node[circ] (a'') [below of=a, yshift=0.5cm] {};
\node[circ] (b) [below left of=a] {};
\node[circ] (c) [below right of=a] {};
\node[circ] (d) [above right of=a] {};
\node[circ] (e) [above left of=a] {};
\node[circ] (f) [below right of=d] {};
\node[circle,draw=none,inner sep=1pt] (fake) [below left of=b]{};

\draw (a') edge[-] (b)
(a') edge[-] (c)
(a') edge[-] (d)
(a') edge[-] (e)
(a') edge[-] (f)
(a'') edge[-] (b)
(a'') edge[-] (c)
(a'') edge[-] (d)
(a'') edge[-] (e)
(a'') edge[-] (f)
(b) edge[-] (c)
(b) edge[-] (e)
(c) edge[-] (d)
(c) edge[-] (f)
(d) edge[-] (f)
(d) edge[-] (e);
\end{tikzpicture}
\caption{$H_4$}
\end{subfigure}

\vspace*{5mm}

\begin{subfigure}[b]{\linewidth}
\centering
\begin{tikzpicture}[node distance=0.75cm]
\node[circ] (a) {};
\node[circ] (b) [below left of=a] {};
\node[circ] (c) [below right of=a] {};
\node[circ] (d) [above right of=a] {};
\node[circ] (e) [above left of=a] {};
\node[circ] (f) [above left of=d] {};
\node[circ] (g) [below right of=d] {};

\draw (a) edge[-] (b)
(a) edge[-] (c)
(a) edge[-] (d)
(a) edge[-] (e)
(a) edge[-] (g)
(b) edge[-] (c)
(b) edge[-] (e)
(c) edge[-] (d)
(c) edge[-] (g)
(d) edge[-] (f)
(d) edge[-] (e)
(d) edge[-] (g)
(e) edge[-] (f)
(f) edge[-,bend left] (g);
\end{tikzpicture}
\caption{$H_8$}
\end{subfigure}
\end{minipage}
\caption{PCA graphs which are minimally non $B_1$-EPG (the serpentine line connecting two vertices indicates the existence of either an edge or a nonedge between those two vertices).}
\label{PCAB1}
\end{figure}
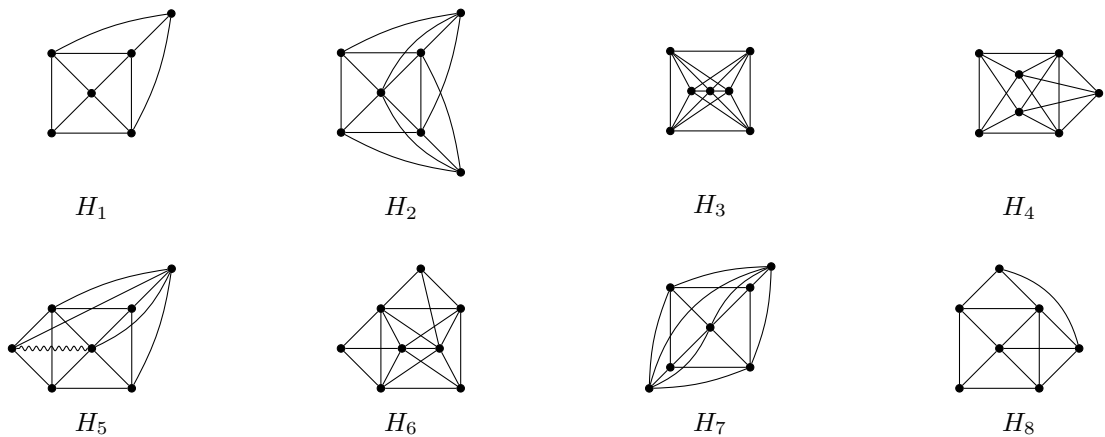
\end{center}

\begin{theorem}
\label{thmPCA}
Let $G$ be a PCA graph. Then $G$ is $B_1$-EPG if and only if $G$ is $\{H_i, C_{4k-1}^k, 1 \leq i \leq 8, k \geq 2\}$-free.
\end{theorem}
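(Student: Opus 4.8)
The statement is an equivalence, so I would prove the two directions separately. The easy direction is necessity: I must show that each $H_i$ ($1 \leq i \leq 8$) and each $C_{4k-1}^k$ ($k \geq 2$) is (a) a PCA graph and (b) not $B_1$-EPG, and that each is minimal with respect to these properties (every proper induced subgraph that is PCA is also $B_1$-EPG). That $C_{4k-1}^k$ is PCA and not $B_1$-EPG is already supplied by \cite{NCA}, so it only remains to check minimality for this family and to handle the eight sporadic graphs $H_1,\dots,H_8$. For the $H_i$, membership in PCA can be verified by exhibiting a proper (and normal) circular arc representation of each, or by checking against the forbidden list in Theorem \ref{PCA}. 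The ``not $B_1$-EPG'' part is the combinatorial heart of the sporadic cases: for each $H_i$ I would assume a $B_1$-EPG representation exists, locate a large clique (each $H_i$ contains a sizable clique, so by the standard Golumbic--Lipshteyn--Stern edge-clique lemma its paths pairwise share edges in one of a few configurations — an ``edge-clique'' on a single grid-edge or a ``claw-clique'' around a grid-point), and then derive a contradiction by tracking how the remaining vertices' single-bend paths must attach to this clique. Minimality is then checked vertex by vertex: deleting any one vertex yields a graph that is still PCA and that one exhibits an explicit $B_1$-EPG representation for.

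\textbf{The sufficiency direction.} This is the substantial part. I would assume $G$ is PCA and $\{H_i, C_{4k-1}^k : 1\le i\le 8,\ k\ge 2\}$-free, and construct a $B_1$-EPG representation. The natural plan is to exploit the structure theory of proper circular arc graphs. Using the proper-and-normal circular arc representation guaranteed by \cite{Tucker}, I would first dispose of the case where $G$ is an interval graph: interval graphs are $B_0$-EPG, hence $B_1$-EPG, and this is where being $C_{4k-1}^k$-free for the relevant small cases (and asteroidal-triple considerations via Theorem \ref{theo:interval}) gets used. So assume $G$ is a proper circular arc graph that is not an interval graph; then its arcs genuinely wrap around the circle. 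I would look at a proper circular arc model, order the arcs cyclically by their clockwise endpoints, and try to ``cut'' the circle at a suitable point to lay the model out on a grid: each arc that does not cross the cut becomes a path (possibly with one bend) in one ``strip'' of the grid, and the arcs crossing the cut are handled by routing them around a corner of the grid — this is exactly the EPR-style layout used in \cite{NCA} to show PCA $\subseteq B_2$-EPG. The $B_1$ constraint forces me to show that the forbidden subgraphs are precisely the obstructions to doing this layout with at most one bend per path; concretely, the $H_i$ rule out the local configurations of overlapping arcs near the cut that would require a second bend, and $C_{4k-1}^k$-freeness rules out the ``long'' wrap-around configurations (the $C_{4k-1}^k$ are the minimal graphs whose arc model is forced to wind around enough times to need two bends).

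\textbf{Carrying out sufficiency.} More precisely, I would proceed by a case analysis on the maximum clique size / the ``width'' of the circular arc model (how many arcs cover a generic point of the circle). If this width is small, the model is close to an interval model and a direct construction works. Otherwise I would identify, inside the proper circular arc model, a cyclic sequence of cliques $Q_1, \dots, Q_m$ covering $V(G)$ (the cliques corresponding to arc-model ``columns''), and build the grid representation strip by strip, placing the path of a vertex $v$ so that it spans exactly the grid-edges corresponding to the consecutive cliques containing $v$; a vertex needs a bend only when its clique-interval wraps past the chosen cut. The crux is choosing the cut and the fine placement so that no path needs two bends, and proving that if this is impossible then one of the $H_i$ or some $C_{4k-1}^k$ is induced — this contrapositive extraction of a forbidden subgraph from a ``bad'' local pattern is, I expect, the main obstacle and the most delicate bookkeeping in the whole argument. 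Throughout, I would lean on Theorem \ref{PCA} to normalise the PCA model and on the explicit small structure of the $H_i$ (their cliques, their degree-one-into-a-clique vertices, the serpentine edges indicating the two variants of $H_1, H_2, H_5, H_7, H_8$) to enumerate the finitely many local configurations that can arise near the cut.
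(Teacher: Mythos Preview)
Your necessity outline is in the right spirit but less sharp than the paper's: every $H_i$ contains an induced $W_4$, and the paper uses the fact (from \cite{Ries}) that in any $B_1$-EPG representation the 4-cycle of a $W_4$ must form a true or false pie, with any centre confined to one of the two grid lines through the pie point; it then tracks where each extra vertex of $H_i$ is forced to lie relative to this pie and derives the contradiction directly, with no edge-clique/claw-clique classification. (Incidentally, only $H_5$ carries a serpentine edge, not five of the eight graphs.)

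For sufficiency your plan diverges from the paper and leaves the decisive step open. The paper does not cut the circle and lay arcs out strip by strip. It first disposes of the case where no three arcs cover the circle: then $G$ is PHCA $\subset$ NHCA, and \cite{NCA} gives NHCA $\cap\,\{C_{4k-1}^k\}_{k\ge 2}$-free $= B_1$-EPR $\subset B_1$-EPG, so this case costs nothing --- and this is essentially the only place the $C_{4k-1}^k$ family enters. Otherwise there is a dominating triangle, and the proof splits on whether $G$ contains an induced $W_4$. When it does, the 4-cycle $x_2x_3x_4x_5$ of the wheel is itself dominating; every remaining vertex is classified by its trace on $\{x_2,x_3,x_4,x_5\}$ into finitely many classes $\mathcal{A}_{j,j+1}$, $\mathcal{A}_j$, $\mathcal{A}_{c'}$, and then each forbidden $H_i$ (together with the PCA forbidden list of Theorem~\ref{PCA}) is invoked to force one specific complete/anticomplete relation between a pair of these classes; with the adjacency pattern thus pinned down, an explicit $B_1$-EPG drawing is exhibited case by case. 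When $G$ has no $W_4$, a short argument shows $G$ is interval or has one simple structure that is drawn directly. The gap in your proposal is that this entire mechanism --- a dominating 4-cycle, a finite vertex partition by trace on it, and a one-to-one use of each $H_i$ to enforce an adjacency constraint --- is absent; your ``if no good cut exists, extract a forbidden subgraph'' is exactly where the proof lives, and without a concrete partition scheme there is no visible route from forbidding the eight specific $H_i$ to a realisable $B_1$-EPG layout.
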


\begin{proof}

\textit{\underline{Necessary condition}.} Let us show that for all $1 \leq i \leq 8$, $H_i$ is not $B_1$-EPG. Observe that all these graphs contain an induced 4-wheel; we denote by $a$, $b$, $c$ and $d$ the four vertices of the 4-cycle of the 4-wheel, and $e$ (one of) its center(s). As shown in \cite{Ries}, this 4-cycle can only be represented by either a true pie or a false pie as $\mathcal{P}_e$ should intersect all four corresponding paths of the cycle (see Fig. \ref{b1}). 

Assume henceforth that the 4-wheel is represented by a true pie using column $x$ and row $y$ of the grid (a similar reasoning applies if it is represented by a false pie). Then, $\mathcal{P}_e$ lies either on column $x$ or row $y$ and strictly contains the grid-point $(x,y)$, since it must intersect every path of the 4-cycle (see Fig. \ref{tworep}). Consequently, $H_3$ can not be $B_1$-EPG.

If a vertex $v$ is adjacent to three vertices of the 4-cycle, say $a$, $b$ and $c$, its associated path must contain row $x$ and/or column $y$, since it has to intersect $\mathcal{P}_a$, $\mathcal{P}_b$ and $\mathcal{P}_c$. However, $\mathcal{P}_v$ can not lie entirely on column $x$ or row $y$ as it would otherwise intersect $\mathcal{P}_d$. Hence, it lies on both similarly to $\mathcal{P}_b$; but then $\mathcal{P}_v$ necessarily intersects $\mathcal{P}_e$, which implies that $H_1$ is not $B_1$-EPG (see Fig. \ref{three}). 

If a second vertex $w$ intersects three different vertices of the 4-cycle, we distinguish two cases. Either those vertices are $b$, $c$ and $d$ (the case where they are $d$, $a$ and $b$ is symmetric), then as shown previously, $\mathcal{P}_w$ lies on column $x$ and row $y$ similarly to $\mathcal{P}_c$, and therefore intersects $\mathcal{P}_v$. Or they are $c$, $d$ and $a$ in which case $\mathcal{P}_w$ lies on column $x$ and row $y$ similarly to $\mathcal{P}_d$, and consequently shares only the grid-point $(x,y)$ with $\mathcal{P}_v$. Hence, neither $H_2$ nor $H_7$ are $B_1$-EPG. 

If $z$ is a vertex adjacent to exactly two consecutive vertices of the 4-cycle, say $a$ and $d$, then $\mathcal{P}_z$ uses row $y$, where $\mathcal{P}_a$ and $\mathcal{P}_d$ intersect, without strictly containing the grid-point $(x,y)$ as otherwise it would intersect other vertices of the 4-cycle (note that $\mathcal{P}_z$ may then only intersect centers of the 4-wheel which are adjacent and lie on row $y$). Hence, $H_4$ and $H_5$ are not $B_1$-EPG (see Fig. \ref{two}). 

Finally, since a vertex $t$ adjacent to only $a$ and $b$ would use column $x$, where $\mathcal{P}_a$ and $\mathcal{P}_b$ intersect, without strictly containing the grid-point $(x,y)$, $H_6$ and $H_8$ cannot be $B_1$-EPG. We conclude this part of the proof by noticing that $C_{4k-1}^k \not \in B_1$-EPG for $k \geq 2$, as shown in \cite{NCA}.

\begin{center}
\begin{figure}[h]
\begin{minipage}[b]{0.5\textwidth}
\centering
\begin{subfigure}[b]{\linewidth}
\begin{minipage}[b]{.45\textwidth}
\centering
\begin{tikzpicture}
\coordinate (o) at (-0.75,0);
\coordinate (e) at (0.75,0);
\coordinate (n) at (0,0.75);
\coordinate (s) at (0,-0.75);

\draw[thick] ($(o) + (0,0.05)$) -- (-0.05,0.05) -- ($(n) + (-0.05,0)$);
\draw[thick] ($(o) + (0,-0.05)$) -- (-0.05,-0.05) -- ($(s) + (-0.05,0)$);
\draw[thick] ($(e) + (0,0.05)$) -- (0.05,0.05) -- ($(n) + (0.05,0)$);
\draw[thick] ($(e) + (0,-0.05)$) -- (0.05,-0.05) -- ($(s) + (0.05,0)$);
\end{tikzpicture}
\end{minipage}
\begin{minipage}[b]{0.45\textwidth}
\centering
\begin{tikzpicture}
\coordinate (o) at (-0.75,0);
\coordinate (e) at (0.75,0);
\coordinate (n) at (0,0.75);
\coordinate (s) at (0,-0.75);

\draw[thick] ($(o) + (0,0.1)$) -- (-0.1,0.1) -- ($(n) + (-0.1,0)$);
\draw[thick] (o) -- (e);
\draw[thick] (n) -- (s);
\draw[thick] ($(e) + (0,-0.1)$) -- (0.1,-0.1) -- ($(s) + (0.1,0)$);
\end{tikzpicture}
\end{minipage}
\caption{A true pie (left) and  a false pie (right).}
\label{b1}
\end{subfigure}

\vspace*{5mm}

\begin{subfigure}[b]{\linewidth}
\centering
\begin{minipage}[b]{.45\textwidth}
\centering
\begin{tikzpicture}[scale=.8]
\coordinate (o) at (-1,0);
\coordinate (e) at (1,0);
\coordinate (n) at (0,1);
\coordinate (s) at (0,-1);

\draw[thick] (o) -- ($(e)+ (0.2,0)$) node[pos=0.99,above] {\tiny $\mathcal{P}_e$};
\draw[thick] ($(o) + (0,0.1)$) -- (-0.05,0.1) -- ($(n) + (-0.05,0)$) node[pos=0.99,left] {\tiny $\mathcal{P}_a$};
\draw[thick] ($(o) + (0,-0.1)$) -- (-0.05,-0.1) -- ($(s) + (-0.05,0)$) node[pos=0.99,left] {\tiny $\mathcal{P}_d$};
\draw[thick] ($(e) + (0,0.1)$) -- (0.05,0.1) -- ($(n) + (0.05,0)$) node[pos=0.99,right] {\tiny $\mathcal{P}_b$};
\draw[thick] ($(e) + (0,-0.1)$) -- (0.05,-0.1) -- ($(s) + (0.05,0)$) node[pos=0.99,right] {\tiny $\mathcal{P}_c$};
\draw[thick] ($(o) + (0,0.2)$) -- ($(o) + (0.6,0.2)$) node[midway,label=above:{\tiny $\mathcal{P}_z$}] {};
\end{tikzpicture}
\end{minipage}
\begin{minipage}[b]{.45\textwidth}
\centering
\begin{tikzpicture}[scale=.8]
\coordinate (o) at (-1,0);
\coordinate (e) at (1,0);
\coordinate (n) at (0,1);
\coordinate (s) at (0,-1);

\draw[thick] ($(o) + (.55,0)$) -- ($(e)+ (0.2,0)$) node[pos=0.99,above] {\tiny $\mathcal{P}_e$};
\draw[thick] ($(o) + (0,0.1)$) -- (-0.05,0.1) -- ($(n) + (-0.05,0)$) node[pos=0.99,left] {\tiny $\mathcal{P}_a$};
\draw[thick] ($(o) + (0,-0.1)$) -- (-0.05,-0.1) -- ($(s) + (-0.05,0)$) node[pos=0.99,left] {\tiny $\mathcal{P}_d$};
\draw[thick] ($(e) + (0,0.1)$) -- (0.05,0.1) -- ($(n) + (0.05,0)$) node[pos=0.99,right] {\tiny $\mathcal{P}_b$};
\draw[thick] ($(e) + (0,-0.1)$) -- (0.05,-0.1) -- ($(s) + (0.05,0)$) node[pos=0.99,right] {\tiny $\mathcal{P}_c$};
\draw[thick] ($(o) + (0,0.2)$) -- ($(o) + (0.45,0.2)$) node[midway,label=above:{\tiny $\mathcal{P}_z$}] {};
\end{tikzpicture}
\end{minipage}
\caption{Vertex $z$ is adjacent to $a$, $d$ and a center (left) or not (right).}
\label{two}
\end{subfigure}
\end{minipage}
\begin{minipage}[b]{.5\textwidth}
\centering
\begin{subfigure}[b]{\linewidth}
\begin{minipage}[b]{0.45\textwidth}
\centering
\begin{tikzpicture}[scale=.8]
\coordinate (o) at (-1,0);
\coordinate (e) at (1,0);
\coordinate (n) at (0,1);
\coordinate (s) at (0,-1);

\draw[thick] (o) -- ($(e)+ (0.2,0)$) node[pos=0.99,above] {\tiny $\mathcal{P}_e$};
\draw[thick] ($(o) + (0,0.1)$) -- (-0.05,0.1) -- ($(n) + (-0.05,0)$) node[pos=0.99,left] {\tiny $\mathcal{P}_a$};
\draw[thick] ($(o) + (0,-0.1)$) -- (-0.05,-0.1) -- ($(s) + (-0.05,0)$) node[pos=0.99,left] {\tiny $\mathcal{P}_d$};
\draw[thick] ($(e) + (0,0.1)$) -- (0.05,0.1) -- ($(n) + (0.05,0)$) node[pos=0.99,right] {\tiny $\mathcal{P}_b$};
\draw[thick] ($(e) + (0,-0.1)$) -- (0.05,-0.1) -- ($(s) + (0.05,0)$) node[pos=0.99,right] {\tiny $\mathcal{P}_c$};
\end{tikzpicture}
\end{minipage}
\begin{minipage}[b]{0.45\textwidth}
\centering
\begin{tikzpicture}[scale=.8]
\coordinate (o) at (-1,0);
\coordinate (e) at (1,0);
\coordinate (n) at (0,1);
\coordinate (s) at (0,-1);

\draw[thick] (s) -- ($(n)+ (0,0.2)$) node[pos=0.99,above] {\tiny $\mathcal{P}_e$};
\draw[thick] ($(o) + (0,0.05)$) -- (-0.1,0.05) -- ($(n) + (-0.1,0)$) node[pos=0.99,left] {\tiny $\mathcal{P}_a$};
\draw[thick] ($(o) + (0,-0.05)$) -- (-0.1,-0.05) -- ($(s) + (-0.1,0)$) node[pos=0.99,left] {\tiny $\mathcal{P}_d$};
\draw[thick] ($(e) + (0,0.05)$) -- (0.1,0.05) -- ($(n) + (0.1,0)$) node[pos=0.99,right] {\tiny $\mathcal{P}_b$};
\draw[thick] ($(e) + (0,-0.05)$) -- (0.1,-0.05) -- ($(s) + (0.1,0)$) node[pos=0.99,right] {\tiny $\mathcal{P}_c$};
\end{tikzpicture}
\end{minipage}
\caption{Representations of $W_4$ with a true pie.} 
\label{tworep}
\end{subfigure}

\vspace*{5mm}

\begin{subfigure}[b]{\linewidth}
\begin{minipage}[b]{0.45\textwidth}
\centering
\begin{tikzpicture}[scale=.8]
\coordinate (o) at (-1,0);
\coordinate (e) at (1,0);
\coordinate (n) at (0,1);
\coordinate (s) at (0,-1);

\draw[thick] (o) -- ($(e)+ (0.2,0)$) node[pos=0.99,above] {\tiny $\mathcal{P}_e$};
\draw[thick] ($(o) + (0,0.1)$) -- (-0.05,0.1) -- ($(n) + (-0.05,0)$) node[pos=0.99,left] {\tiny $\mathcal{P}_a$};
\draw[thick] ($(o) + (0,-0.1)$) -- (-0.05,-0.1) -- ($(s) + (-0.05,0)$) node[pos=0.99,left] {\tiny $\mathcal{P}_d$};
\draw[thick] ($(e) + (0,0.1)$) -- (0.05,0.1) -- ($(n) + (0.05,0)$) node[pos=0.99,right] {\tiny $\mathcal{P}_b$};
\draw[thick] ($(e) + (0,-0.1)$) -- (0.05,-0.1) -- ($(s) + (0.05,0)$) node[pos=0.99,right] {\tiny $\mathcal{P}_c$};
\draw[thick] ($(n) + (0.15,-0.2)$) -- (0.15, 0.2) -- ($(e) + (-0.2,0.2)$) node[pos=0.99,above] {\tiny $\mathcal{P}_v$};
\end{tikzpicture}
\end{minipage}
\begin{minipage}[b]{0.45\textwidth}
\centering
\begin{tikzpicture}[scale=.8]
\coordinate (o) at (-1,0);
\coordinate (e) at (1,0);
\coordinate (n) at (0,1);
\coordinate (s) at (0,-1);

\draw[thick] (s) -- ($(n)+ (0,0.2)$) node[pos=0.99,above] {\tiny $\mathcal{P}_e$};
\draw[thick] ($(o) + (0,0.05)$) -- (-0.1,0.05) -- ($(n) + (-0.1,0)$) node[pos=0.99,left] {\tiny $\mathcal{P}_a$};
\draw[thick] ($(o) + (0,-0.05)$) -- (-0.1,-0.05) -- ($(s) + (-0.1,0)$) node[pos=0.99,left] {\tiny $\mathcal{P}_d$};
\draw[thick] ($(e) + (0,0.05)$) -- (0.1,0.05) -- ($(n) + (0.1,0)$) node[pos=0.99,right] {\tiny $\mathcal{P}_b$};
\draw[thick] ($(e) + (0,-0.05)$) -- (0.1,-0.05) -- ($(s) + (0.1,0)$) node[pos=0.99,right] {\tiny $\mathcal{P}_c$};
\draw[thick] ($(n) + (0.2,-0.2)$) -- (0.2,0.15) -- ($(e) + (0,0.15)$) node[pos=0.99,above] {\tiny $\mathcal{P}_v$};
\end{tikzpicture}
\end{minipage}
\caption{Vertex $v$ is adjacent to $a$, $b$ and $c$.}
\label{three}
\end{subfigure}
\end{minipage}
\caption{$B_1$-EPG representations.}
\end{figure}
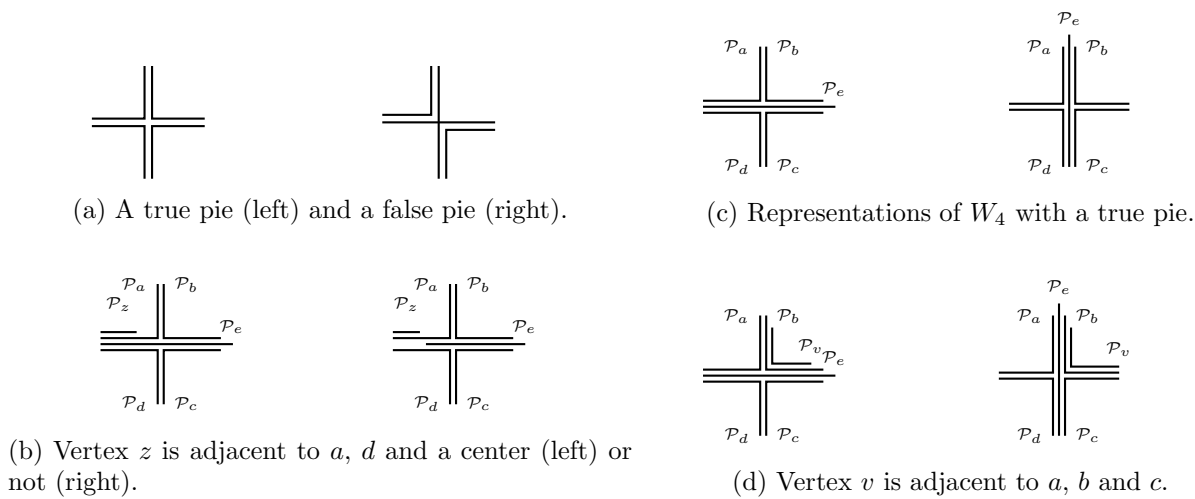
\end{center}

\textit{\underline{Sufficient condition}.} Let $G=(V,E)$ be a PCA graph which is $\{H_i, C_{4k-1}^k, 1 \leq i \leq 8, k \geq2\}$-free. Consider a normal proper representation $\mathcal{R} = (\mathcal{C}, \mathcal{A})$ of $G$, where $\mathcal{C}$ is a circle and $\mathcal{A}$ is a set of open arcs of $\mathcal{C}$ in one-to-one correspondance with the vertices of $G$ (notice that such a representation exists due to \cite{Tucker}). Before turning to the proof, let us first make the following observation.

\begin{observation}
\label{obs1}
Any set of arcs in $\mathcal{A}$ covering the circle $\mathcal{C}$ corresponds to a dominating set in $G$. In particular, if $G$ contains a 4-wheel as an induced subgraph, then $G$ has a dominating triangle.
\end{observation}

\textit{Proof.} It is clear that such a set of arcs corresponds to a dominating set in $G$. If $G$ contains a 4-wheel, then the arcs corresponding to 4-cycle $C$ of the 4-wheel cover the circle $\mathcal{C}$. But then the arc representing the center of the 4-wheel together with two arcs corresponding to two vertices of $C$ must also cover $\mathcal{C}$, i.e. $G$ has a dominating triangle. $\diamond$\\

If we assume that no three arcs in $\mathcal{A}$ cover $\mathcal{C}$, then $G$ is PHCA and the result follows from the fact that PHCA $\cap \{C_{4k -1}^k\}_{k \geq 2}$-free $\subset$ NHCA $\cap \{C_{4k -1}^k\}_{k \geq 2}$-free $=B_1$-EPR (see \cite{NCA}). Hence, we may assume that there exist three arcs in $\mathcal{A}$ covering $\mathcal{C}$.  If $G$ contains a 4-wheel, let $C = \{x_1,x_2,x_3\}$ denote the dominating triangle following from Observation \ref{obs1}, with $x_1$ being the center of the 4-wheel. Otherwise, let $C = \{x_1, x_2, x_3\}$ be any triangle whose corresponding arcs cover $\mathcal{C}$. In both cases, each vertex is adjacent to at least two vertices of $C$. Indeed, if $G$ contains a $4$-wheel, then this follows from the fact that $G$ is claw-free (see Theorem \ref{PCA}); if $G$ contains no $4$-wheel, this follows from the fact that $\mathcal{R}$ is a proper representation. For $j \in \{1,2,3\}$, denote by $\mathcal{A}_{j,j+1} = \{x \in V ~|~ xx_{j-1} \not \in E\}$, where indices are taken modulo 3, the subset of vertices adjacent to only $x_j$ and $x_{j+1}$. Note that each $\mathcal{A}_{j,j+1}$ is a clique as $G$ would otherwise contain an induced claw, namely $x_{j+1}, x_{j-1}, x, x'$ for any two $x, x' \in \mathcal{A}_{j,j+1}$ such that $xx' \not \in E$. Similarly, consider the subset of vertices $\mathcal{A}_c = \{x \in V ~|~ \forall j \in \{1,2,3\}, xx_j \in E\}$ adjacent to all three vertices of $C$. We now distinguish cases depending on whether $G$ contains a 4-wheel as an induced subgraph or not.

\begin{case}[\textit{$G$ contains an induced 4-wheel}] 
According to the above, $\mathcal{A}_{1,2}$ and $\mathcal{A}_{1,3}$ are not anticomplete. Thus, there exist $x_4 \in \mathcal{A}_{1,3}$ and $x_5 \in \mathcal{A}_{1,2}$ such that $x_4x_5 \in E$, which together with $x_2$ and $x_3$ form the 4-cycle $C'$ of the 4-wheel. Since $C'$ is dominating and $G$ has no induced claw, each remaining vertex of $G$ is adjacent to at least two vertices of $C'$. Consider accordingly the subset of vertices $\mathcal{A}_{3,4}$ (resp. $\mathcal{A}_{4,5}$, $\mathcal{A}_{5,2}$) adjacent to only $x_3$ and $x_4$ (resp. $x_4$ and $x_5$, $x_5$ and $x_2$), the subset of vertices $\mathcal{A}_2$ (resp. $\mathcal{A}_3$; $\mathcal{A}_4$; $\mathcal{A}_5$) adjacent to only $x_5$, $x_2$ and $x_3$ (resp. $x_2$, $x_3$ and $x_4$; $x_3$, $x_4$ and $x_5$; $x_4$, $x_5$ and $x_2$) and the subset of vertices $\mathcal{A}_{c'} = \{ x \in V ~|~ \forall i \in \{2,3,4,5\}, xx_i \in E\}$ adjacent to all vertices of $C'$ (note that $x_1 \in \mathcal{A}_{c'}$). Since $G$ contains no induced claw, each $\mathcal{A}_{j,j+1}$ is a clique, as well as each $\mathcal{A}_j$. Furthermore, since $G$ contains no induced:

\begin{itemize}
\item[$\bullet$] $H_1$, each $\mathcal{A}_j$ is complete to $\mathcal{A}_{c'}$, for $j=2,3,4,5$; 
\item[$\bullet$] $H_2$, we have $\mathcal{A}_2 - \mathcal{A}_3 - \mathcal{A}_4 - \mathcal{A}_5 - \mathcal{A}_2$;
\item[$\bullet$] $H_7$, we have $\mathcal{A}_2 \cdots \mathcal{A}_4$ and $\mathcal{A}_3 \cdots \mathcal{A}_5$;
\item[$\bullet$] $H_5$, we have $\mathcal{A}_{j,k} \cdots \mathcal{A}_i$ for all $ i \neq j,k$ with $(j,k) \in \{(2,3), (3,4), (4,5), (5,2)\}$;
\item[$\bullet$] $H_8$ and 5-wheel (by Theorem \ref{PCA}), we have $\mathcal{A}_{2,3} \cdots \mathcal{A}_{3,4} \cdots \mathcal{A}_{4,5} \cdots \mathcal{A}_{5,2} \cdots \mathcal{A}_{2,3}$;
\item[$\bullet$] $\overline{C_6}$ (by Theorem \ref{PCA}), we have $\mathcal{A}_{2,3} \cdots \mathcal{A}_{4,5}$ and $\mathcal{A}_{3,4} \cdots \mathcal{A}_{5,2}$;
\item[$\bullet$] claw (by Theorem \ref{PCA}), we have $\mathcal{A}_j - \mathcal{A}_{j,k} - \mathcal{A}_{k}$ for $(j,k) \in \{(2,3), (3,4), (4,5), (5,2)\}$.
\end{itemize}

Now, if we assume that both $\mathcal{A}_{2,3}$ and $\mathcal{A}_{4,5}$ are nonempty, then both are complete to $\mathcal{A}_{c'}$ as $G$ would otherwise contain either $G_2$ or $G_5$ as an induced subgraph (see Fig. \ref{Fig:PCA}). But then $\mathcal{A}_{c'}$ is a clique since $G$ does not contain $H_4$ as an induced subgraph. Consequently, $\mathcal{A}_{3,4}$ and $\mathcal{A}_{5,2}$ can not both be nonempty; if it were indeed the case, both $\mathcal{A}_{3,4}$ and $\mathcal{A}_{5,2}$ would also be complete to $\mathcal{A}_{c'}$, and $G$ would then contain an induced claw, namely $x_1,v,w$ and $z$, with $v\in \mathcal{A}_{2,3}$, $w\in \mathcal{A}_{3,4}$ and $z\in \mathcal{A}_{4,5}$. Hence, we may assume, without loss of generality, that $\mathcal{A}_{5,2} = \emptyset$ (the same reasoning applies if we assume that $\mathcal{A}_{3,4} = \emptyset$). But then $\mathcal{A}_{c'}$ and $\mathcal{A}_{3,4}$ must be anti-complete as $G$ would otherwise contain an induced claw (the same as before), and $G$ is consequently $B_1$-EPG (see Fig. \ref{cas1}). 

\begin{center}
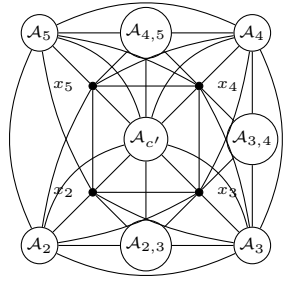
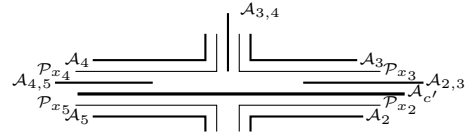
\begin{figure}[h]
\begin{minipage}[b]{0.5\textwidth}
\begin{subfigure}[b]{\linewidth}
\centering
\begin{tikzpicture}[node distance=1cm]
\node[clique] (Ac') {$\mathcal{A}_{c'}$};
\node[circ,label={[label distance=.05cm]180: \tiny $x_2$}] (a2) [below left of=Ac'] {};
\node[circ,label={[label distance=.05cm]0:\tiny $x_3$}] (a3) [below right of=Ac'] {};
\node[circ,label={[label distance=.05cm]0:\tiny $x_4$}] (a4) [above right of=Ac'] {};
\node[circ,label={[label distance=.05cm]180:\tiny $x_5$}] (a5) [above left of=Ac'] {};
\begin{pgfinterruptboundingbox}
\node[clique] (A2) [below left of=a2] {$\mathcal{A}_2$};
\node[clique] (A23) [below right of=a2] {$\mathcal{A}_{2,3}$};
\node[clique] (A3) [below right of=a3] {$\mathcal{A}_3$};
\node[clique] (A34) [above right of=a3] {$\mathcal{A}_{3,4}$};
\node[clique] (A4) [above right of=a4] {$\mathcal{A}_4$};
\node[clique] (A45) [above left of=a4] {$\mathcal{A}_{4,5}$};
\node[clique] (A5) [above left of=a5] {$\mathcal{A}_5$};
\end{pgfinterruptboundingbox}

\draw (Ac') edge[-] (a2)
(Ac') edge[-] (a3)
(Ac') edge[-] (a4)
(Ac') edge[-] (a5)
(Ac') edge[-] (A23)
%(Ac') edge[dashed] (A34)
(Ac') edge[-] (A45)
(Ac') edge[-,bend right=30] (A2)
(Ac') edge[-,bend left=30] (A3)
(Ac') edge[-,bend left=30] (A4)
(Ac') edge[-,bend right=30] (A5)
(a2) edge[-] (a3)
(a2) edge[-] (a5)
(a2) edge[-,bend left=10] (A5)
(a2) edge[-] (A2)
(a2) edge[-] (A23)
(a2) edge[-,bend right=10] (A3)
(a3) edge[-] (a4)
(a3) edge[-,bend left=10] (A2)
(a3) edge[-] (A23)
(a3) edge[-] (A3)
(a3) edge[-] (A34)
(a3) edge[-,bend right=10] (A4)
(a4) edge[-] (a5)
(a4) edge[-,bend left=10] (A3)
(a4) edge[-] (A34)
(a4) edge[-] (A4)
(a4) edge[-] (A45)
(a4) edge[-,bend right=10] (A5)
(a5) edge[-,bend left=10] (A4)
(a5) edge[-] (A45)
(a5) edge[-] (A5)
(a5) edge[-,bend right=10] (A2)
(A2) edge[-,bend left=25] (A5)
(A2) edge[-] (A23)
(A2) edge[-,bend right=25] (A3)
(A3) edge[-] (A23)
(A3) edge[-] (A34)
(A3) edge[-,bend right=25] (A4)
(A4) edge[-] (A34) 
(A4) edge[-] (A45)
(A4) edge[-,bend right=25] (A5)
(A5) edge[-] (A45);
%(A23) edge[dashed,bend right=20] (A34)
%(A34) edge[dashed,bend right=20] (A45);
\end{tikzpicture}
\caption{General structure of $G$.}
\end{subfigure}
\end{minipage}
\begin{minipage}[b]{0.5\textwidth}
\begin{subfigure}[b]{\linewidth}
\centering
\begin{tikzpicture}
\coordinate (s) at (0,-.5);
\coordinate (e) at (2,0);
\coordinate (n) at (0,.8);
\coordinate (o) at (-2,0);
\coordinate (c) at (0,0);
\node[invisible] (fake) at (0,-1) {};
\node[invisible] (Ac') at ($(e) + (0.6,0)$) {$\mathcal{A}_{c'}$};
\node[invisible] (A23) at ($(e) + (0.9,0.15)$) {$\mathcal{A}_{2,3}$};
\node[invisible] (a3) at ($(e) + (0.3,0.3)$) {$\mathcal{P}_{x_3}$};
\node[invisible] (A3) at ($(e) + (-.08,0.45)$) {$\mathcal{A}_3$};
\node[invisible] (a2) at ($(e) + (0.3,-0.15)$) {$\mathcal{P}_{x_2}$};
\node[invisible] (A2) at ($(e) + (0,-0.3)$) {$\mathcal{A}_2$};
\node[invisible] (A45) at ($(o) + (-0.6,0.15)$) {$\mathcal{A}_{4,5}$};
\node[invisible] (a4) at ($(o) + (-0.3,0.3)$) {$\mathcal{P}_{x_4}$};
\node[invisible] (A4) at ($(o) + (0,0.45)$) {$\mathcal{A}_4$};
\node[invisible] (a5) at ($(o) + (-0.3,-0.15)$) {$\mathcal{P}_{x_5}$};
\node[invisible] (A5) at ($(o) + (0,-0.3)$) {$\mathcal{A}_5$};
\node[invisible,label=right:\tiny $\mathcal{A}_{3,4}$](A34) at ($(n) + (0,0.3)$) {};

\draw[very thick] (o) -- (Ac'); %Ac'
\draw[thick] (A45) -- ($(o) + (1,0.15)$); %A45
\draw[thick] (A23) -- ($(e) + (-1,0.15)$); %A23
\draw[thick] (A34) -- ($(n) + (0,-.5)$); %A34
\draw (a4) -- ($(o) + (1.85,0.3)$) -- ($(n) + (-0.15,0)$); %a4
\draw (a5) -- ($(o) + (1.85,-0.15)$) -- ($(s) + (-0.15,0)$); %a5
\draw ($(s) + (0.15,0)$) -- ($(s) + (0.15,.35)$) -- (a2); %a2
\draw (a3) -- ($(e) + (-1.85,0.3)$) -- ($(n) + (0.15,0)$); %a3
\draw[thick] (A4) -- ($(o) + (1.7,0.45)$) -- ($(n) + (-0.3,0)$); %A4
\draw[thick] (A5) -- ($(o) + (1.7,-0.3)$) -- ($(s) + (-0.3,0)$); %A5
\draw[thick] ($(s) + (0.3,0)$) -- ($(s) + (0.3,.2)$) -- (A2); %A2
\draw[thick] (A3) -- ($(e) + (-1.7,0.45)$) -- ($(n) + (0.3,0)$); %A3
\end{tikzpicture}
\caption{A $B_1$-EPG representation of $G$.}
\end{subfigure}
\end{minipage}
\caption{$G$ contains $W_4$ and both $\mathcal{A}_{2,3}$ and $\mathcal{A}_{4,5}$ are nonempty.}
\label{cas1}
\end{figure}
\end{center}

Now, assume without loss of generality, that $\mathcal{A}_{2,3} = \mathcal{A}_{5,2} = \emptyset$ and $\mathcal{A}_{3,4}, \mathcal{A}_{4,5} \neq\emptyset$. We know from the above that $\mathcal{A}_{3,4}$ is anti-complete to $\mathcal{A}_{4,5}$. Also, for all $x \in \mathcal{A}_{c'}$, $x$ must be either complete to $\mathcal{A}_{3,4}$ and anti-complete to $\mathcal{A}_{4,5}$ or, conversely, anti-complete to $\mathcal{A}_{3,4}$ and complete to $\mathcal{A}_{4,5}$, as $G$ would otherwise contain an induced claw; indeed, if there exist $x' \in \mathcal{A}_{3,4}$ and $x'' \in \mathcal{A}_{4,5}$ such that $x$ is nonadjacent (resp. adjacent) to both $x'$ and $x''$, then $x_4,x',x''$ and $x$ (resp. $x,x',x''$ and $x_2$) induce a claw. Thus, we can partition $\mathcal{A}_{c'}$ into subsets $\mathcal{A}_{c'}^i = \{x \in \mathcal{A}_{c'} ~|~ \forall x' \in \mathcal{A}_{i+2,i+3}, xx' \in E\}$ for $i = 1,2$, and, since $G$ is $H_4$-free, both of these subsets are cliques. Assuming $\mathcal{A}_{c'}^1$ and $\mathcal{A}_{c'}^2$ are non empty, there cannot exist $x \in \mathcal{A}_{c'}^1$ and $x' \in \mathcal{A}_{c'}^2$ such that $xx' \not \in E$ since $G$ would otherwise contain an induced $G_2$ (see Fig. \ref{Fig:PCA}); indeed, $x_3, x, x_5, x'$ and $x_2$ would form a 4-wheel, with one vertex of $\mathcal{A}_{3,4}$ adjacent to only $x_3$ and $x$, and one vertex of $\mathcal{A}_{4,5}$ adjacent to only $x_5$ and $x'$. Hence, $\mathcal{A}_{c'}^1 \cup \mathcal{A}_{c'}^2$ is a clique; but then, $G$ contains $H_6$ as an induced subgraph. Thus, we have to assume that exactly one of $\mathcal{A}_{c'}^1$ and $\mathcal{A}_{c'}^2$ is empty, and $G$ is then $B_1$-EPG as an induced subgraph of the previous case. 

Suppose now that only one of the $\mathcal{A}_{j,j+1}$ is non empty, for instance $\mathcal{A}_{3,4}$. If $x \in \mathcal{A}_{c'}$ is adjacent to some vertex $x' \in \mathcal{A}_{3,4}$, then $x$ is complete to $\mathcal{A}_{3,4}$ as $G$ would otherwise contain an induced $G_4$ (see Fig. \ref{Fig:PCA}). We can therefore partition $\mathcal{A}_{c'}$ into two subsets $\mathcal{A}_{c'}^a = \{x \in \mathcal{A}_{c'} ~|~ \forall x' \in \mathcal{A}_{3,4}, xx' \in E\}$ and $\mathcal{A}_{c'}^{na} = \{x \in \mathcal{A}_{c'} ~|~ \forall x' \in \mathcal{A}_{3,4}, xx' \not \in E\}$. Since $G$ does not contain an induced $H_4$, $\mathcal{A}_{c'}^a$ must be a clique; and, since $G$ does not contain an induced claw, $\mathcal{A}_{c'}^{na}$ must also be a clique. If one of $\mathcal{A}_{c'}^a$ and $\mathcal{A}_{c'}^{na}$ is empty, then $G$ is $B_1$-EPG as an induced subgraph of the first case. If both are non empty, then $\mathcal{A}_{c'}^a \cup \mathcal{A}_{c'}^{na}$ is a clique, as $G$ would otherwise contain an induced $G_3$ (see Fig. \ref{Fig:PCA}), and $G$ is consequently $B_1$-EPG (see Fig. \ref{cas2}).

\begin{center}
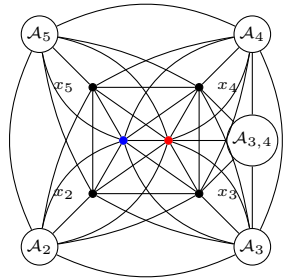
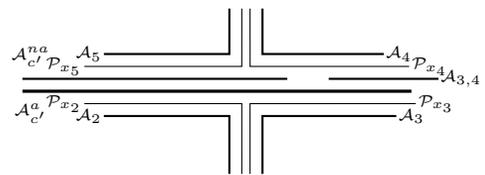
\begin{figure}[h]
\begin{minipage}[b]{0.5\textwidth}
\begin{subfigure}[b]{\linewidth}
\centering
\begin{tikzpicture}[node distance=1cm]
\node[invisible] (c) {};
\node[circ,red] (Ac'a) [right of=c,xshift=-.7cm] {};
\node[circ,blue] (Ac'na) [left of=c,xshift=.7cm] {};
%\node[clique] (Ac'a) [right of=c,xshift=-1cm] {$\mathcal{A}_{c'}^a$};
%\node[clique] (Ac'na) [left of=c,xshift=1cm] {$\mathcal{A}_{c'}^{na}$};
\node[circ,label={[label distance=.05cm]180: \tiny $x_2$}] (a2) [below left of=c] {};
\node[circ,label={[label distance=.05cm]0: \tiny $x_3$}] (a3) [below right of=c] {};
\node[circ,label={[label distance=.05cm]0: \tiny $x_4$}] (a4) [above right of=c] {};
\node[circ,label={[label distance=.05cm]180: \tiny $x_5$}] (a5) [above left of=c] {};
\begin{pgfinterruptboundingbox}
\node[clique] (A2) [below left of=a2] {$\mathcal{A}_2$};
\node[clique] (A3) [below right of=a3] {$\mathcal{A}_3$};
\node[clique] (A34) [above right of=a3] {$\mathcal{A}_{3,4}$};
\node[clique] (A4) [above right of=a4] {$\mathcal{A}_4$};
\node[clique] (A5) [above left of=a5] {$\mathcal{A}_5$};
\end{pgfinterruptboundingbox}

\draw (Ac'a) edge[-] (a2)
(Ac'a) edge[-] (a3)
(Ac'a) edge[-] (a4)
(Ac'a) edge[-] (a5)
(Ac'a) edge[-,bend left=30] (A2)
(Ac'a) edge[-,bend left=30] (A3)
(Ac'a) edge[-,bend right=30] (A4)
(Ac'a) edge[-,bend right=30] (A5)
(Ac'a) edge[-] (Ac'na)
(Ac'a) edge[-] (A34)
(Ac'na) edge[-] (a2)
(Ac'na) edge[-] (a3)
(Ac'na) edge[-] (a4)
(Ac'na) edge[-] (a5)
(Ac'na) edge[-,bend right=30] (A2)
(Ac'na) edge[-,bend right=30] (A3)
(Ac'na) edge[-,bend left=30] (A4)
(Ac'na) edge[-,bend left=30] (A5)
(a2) edge[-] (a3)
(a2) edge[-] (a5)
(a2) edge[-,bend left=10] (A5)
(a2) edge[-] (A2)
(a2) edge[-,bend right=10] (A3)
(a3) edge[-] (a4)
(a3) edge[-,bend left=10] (A2)
(a3) edge[-] (A3)
(a3) edge[-] (A34)
(a3) edge[-,bend right=10] (A4)
(a4) edge[-] (a5)
(a4) edge[-,bend left=10] (A3)
(a4) edge[-] (A34)
(a4) edge[-] (A4)
(a4) edge[-,bend right=10] (A5)
(a5) edge[-,bend left=10] (A4)
(a5) edge[-] (A5)
(a5) edge[-,bend right=10] (A2)
(A2) edge[-,bend left=25] (A5)
(A2) edge[-,bend right=25] (A3)
(A3) edge[-] (A34)
(A3) edge[-,bend right=25] (A4)
(A4) edge[-] (A34) 
(A4) edge[-,bend right=25] (A5);
\end{tikzpicture}
\caption{General structure of $G$ (in red: $\mathcal{A}_{c'}^a$; in blue: $\mathcal{A}_{c'}^{na}$).}
\end{subfigure}
\end{minipage}
\begin{minipage}[b]{0.5\textwidth}
\begin{subfigure}[b]{\linewidth}
\centering
\begin{tikzpicture}[scale=1.1]
\coordinate (s) at (0,-1);
\coordinate (e) at (2,0);
\coordinate (n) at (0,1);
\coordinate (o) at (-2,0);
\coordinate (c) at (0,0);
\node[invisible] (fake) at (0,-2) {};
\node[invisible,label=below:{\tiny $\mathcal{A}_{c'}^a$}] (Ac'a) at ($(o) + (-0.6,0)$) {};
\node[invisible,label=above:{\tiny $\mathcal{A}_{c'}^{na}$}] (Ac'na) at ($(o) + (-0.6,0.15)$) {};
\node[invisible] (A34) at ($(e) + (0.6,0.15)$) {$\mathcal{A}_{3,4}$};
\node[invisible] (a4) at ($(e) + (0.22,0.3)$) {$\mathcal{P}_{x_4}$};
\node[invisible] (A4) at ($(e) + (-.15,0.45)$) {$\mathcal{A}_4$};
\node[invisible] (a3) at ($(e) + (0.3,-0.15)$) {$\mathcal{P}_{x_3}$};
\node[invisible] (A3) at ($(e) + (0,-0.3)$) {$\mathcal{A}_3$};
\node[invisible] (a5) at ($(o) + (-0.2,0.3)$) {$\mathcal{P}_{x_5}$};
\node[invisible] (A5) at ($(o) + (0.1,0.45)$) {$\mathcal{A}_5$};
\node[invisible] (a2) at ($(o) + (-0.2,-0.15)$) {$\mathcal{P}_{x_2}$};
\node[invisible] (A2) at ($(o) + (0.1,-0.3)$) {$\mathcal{A}_2$};

\draw[very thick] ($(Ac'a) + (-0.1,0)$) -- (e); %Ac'a
\draw[thick] ($(Ac'na) + (-0.1,0)$) -- ($(e) + (-1.5,0.15)$); %Ac'na
\draw[thick] (A34) -- ($(e) + (-1,0.15)$); %A34
\draw (a5) -- ($(o) + (1.95,0.3)$) -- ($(n) + (-0.05,0)$); %a5
\draw (a2) -- ($(o) + (1.95,-0.15)$) -- ($(s) + (-0.05,0)$); %a2
\draw ($(s) + (0.05,0)$) -- ($(s) + (0.05,0.85)$) -- (a3); %a3
\draw (a4) -- ($(e) + (-1.95,0.3)$) -- ($(n) + (0.05,0)$); %a4
\draw[thick] (A5) -- ($(o) + (1.8,0.45)$) -- ($(n) + (-0.2,0)$); %A5
\draw[thick] (A2) -- ($(o) + (1.8,-0.3)$) -- ($(s) + (-0.2,0)$); %A2
\draw[thick] ($(s) + (0.2,0)$) -- ($(s) + (0.2,0.7)$) -- (A3); %A3
\draw[thick] (A4) -- ($(e) + (-1.8,0.45)$) -- ($(n) + (0.2,0)$); %A4
\end{tikzpicture}
\caption{A $B_1$-EPG representation of $G$.}
\end{subfigure}
\end{minipage}
\caption{$G$ contains $W_4$ and only $\mathcal{A}_{3,4}$ is non empty.}
\label{cas2}
\end{figure}
\end{center}

Finally, if we assume that every $\mathcal{A}_{j,j+1}$ is empty, since $G$ does not contain $H_3$ or a claw as an induced subgraph, we can partition $\mathcal{A}_{c'}$ into two cliques, $\mathcal{A}_{c'}^1$ and $\mathcal{A}_{c'}^2$, which are anti-complete, and again $G$ is $B_1$-EPG (see Fig. \ref{cas3}).
\end{case}

\begin{center}
\begin{figure}[h]
\begin{minipage}[b]{0.5\textwidth}
\begin{subfigure}[b]{\linewidth}
\centering
\begin{tikzpicture}[node distance=1cm]
\node[invisible] (c) {};
\node[circ,red] (Ac'1) [right of=c,xshift=-.7cm] {};
\node[circ,blue] (Ac'2) [left of=c,xshift=.7cm] {};
%\node[clique] (Ac'1) [right of=c,xshift=-1cm] {$\mathcal{A}_{c'}^1$};
%\node[clique] (Ac'2) [left of=c,xshift=1cm] {$\mathcal{A}_{c'}^2$};
\node[circ,label={[label distance=.05cm]180: \tiny $x_2$}] (a2) [below left of=c] {};
\node[circ,label={[label distance=.05cm]0: \tiny $x_3$}] (a3) [below right of=c] {};
\node[circ,label={[label distance=.05cm]0: \tiny $x_4$}] (a4) [above right of=c] {};
\node[circ,label={[label distance=.05cm]180: \tiny $x_5$}] (a5) [above left of=c] {};
\begin{pgfinterruptboundingbox}
\node[clique] (A2) [below left of=a2] {$\mathcal{A}_2$};
\node[clique] (A3) [below right of=a3] {$\mathcal{A}_3$};
\node[clique] (A4) [above right of=a4] {$\mathcal{A}_4$};
\node[clique] (A5) [above left of=a5] {$\mathcal{A}_5$};
\end{pgfinterruptboundingbox}

\draw (Ac'1) edge[-] (a2)
(Ac'1) edge[-] (a3)
(Ac'1) edge[-] (a4)
(Ac'1) edge[-] (a5)
(Ac'1) edge[-,bend left=30] (A2)
(Ac'1) edge[-,bend left=30] (A3)
(Ac'1) edge[-,bend right=30] (A4)
(Ac'1) edge[-,bend right=30] (A5)
(Ac'2) edge[-] (a2)
(Ac'2) edge[-] (a3)
(Ac'2) edge[-] (a4)
(Ac'2) edge[-] (a5)
(Ac'2) edge[-,bend right=30] (A2)
(Ac'2) edge[-,bend right=30] (A3)
(Ac'2) edge[-,bend left=30] (A4)
(Ac'2) edge[-,bend left=30] (A5)
(a2) edge[-] (a3)
(a2) edge[-] (a5)
(a2) edge[-,bend left=10] (A5)
(a2) edge[-] (A2)
(a2) edge[-,bend right=10] (A3)
(a3) edge[-] (a4)
(a3) edge[-,bend left=10] (A2)
(a3) edge[-] (A3)
(a3) edge[-,bend right=10] (A4)
(a4) edge[-] (a5)
(a4) edge[-,bend left=10] (A3)
(a4) edge[-] (A4)
(a4) edge[-,bend right=10] (A5)
(a5) edge[-,bend left=10] (A4)
(a5) edge[-] (A5)
(a5) edge[-,bend right=10] (A2)
(A2) edge[-,bend left=25] (A5)
(A2) edge[-,bend right=25] (A3)
(A3) edge[-,bend right=25] (A4)
(A4) edge[-,bend right=25] (A5);
\end{tikzpicture}
\caption{General structure of $G$ (in red: $\mathcal{A}_{c'}^1$; in blue: $\mathcal{A}_{c'}^2$).}
\end{subfigure}
\end{minipage}
\begin{minipage}[b]{0.5\textwidth}
\begin{subfigure}[b]{\linewidth}
\centering
\begin{tikzpicture}
\coordinate (s) at (0,-.5);
\coordinate (e) at (2,0);
\coordinate (n) at (0,.5);
\coordinate (o) at (-2,0);
\coordinate (c) at (0,0);
\node[invisible] (fake) at (0,-1) {};
\node[invisible] (Ac'1) at ($(e) + (0.7,0)$) {$\mathcal{A}_{c'}^1$};
\node[invisible] (a3) at ($(e) + (0.3,0.15)$) {$\mathcal{P}_{x_3}$};
\node[invisible] (A3) at ($(e) + (-.06,0.3)$) {$\mathcal{A}_3$};
\node[invisible] (a2) at ($(e) + (0.3,-0.15)$) {$\mathcal{P}_{x_2}$};
\node[invisible] (A2) at ($(e) + (0,-0.3)$) {$\mathcal{A}_2$};
\node[invisible] (a4) at ($(o) + (-0.3,0.15)$) {$\mathcal{P}_{x_4}$};
\node[invisible] (A4) at ($(o) + (0,0.3)$) {$\mathcal{A}_4$};
\node[invisible] (a5) at ($(o) + (-0.3,-0.15)$) {$\mathcal{P}_{x_5}$};
\node[invisible] (A5) at ($(o) + (0,-0.3)$) {$\mathcal{A}_5$};
\node[invisible,label=right:\tiny $\mathcal{A}_{c'}^2$] (Ac'2) at ($(n) + (0,0.3)$) {};

\draw[very thick] (o) -- (Ac'1); %Ac'1
\draw[very thick] (s) -- (Ac'2); %Ac'2
\draw (a4) -- ($(o) + (1.85,0.15)$) -- ($(n) + (-0.15,0)$); %a4
\draw (a5) -- ($(o) + (1.85,-0.15)$) -- ($(s) + (-0.15,0)$); %a5
\draw ($(s) + (0.15,0)$) -- ($(s) + (0.15,.35)$) -- (a2); %a2
\draw (a3) -- ($(e) + (-1.85,0.15)$) -- ($(n) + (0.15,0)$); %a3
\draw[thick] (A4) -- ($(o) + (1.7,0.3)$) -- ($(n) + (-0.3,0)$); %A4
\draw[thick] (A5) -- ($(o) + (1.7,-0.3)$) -- ($(s) + (-0.3,0)$); %A5
\draw[thick] ($(s) + (0.3,0)$) -- ($(s) + (0.3,.2)$) -- (A2); %A2
\draw[thick] (A3) -- ($(e) + (-1.7,0.3)$) -- ($(n) + (0.3,0)$); %A3
\end{tikzpicture}
\caption{A $B_1$-EPG representation of $G$.}
\end{subfigure}
\end{minipage}
\caption{$G$ contains $W_4$ and each $\mathcal{A}_{j,j+1}$ is empty.}
\label{cas3}
\end{figure}
\end{center}

\begin{case}[\textit{$G$ contains no induced 4-wheel}] 
Assume henceforth that $\mathcal{A}_{1,2}, \mathcal{A}_{2,3}$ and $\mathcal{A}_{3,1}$ are pairwise anti-complete. If all three subsets are non empty, then for all $x \in \mathcal{A}_c$, there must exist $j \in \{1,2,3\}$ such that $x$ is complete to both $\mathcal{A}_{j-1,j}$ and $\mathcal{A}_{j,j+1}$, and anti-complete to $\mathcal{A}_{j+1,j+2}$ (otherwise $G$ would contain an induced claw). Hence, we can partition $\mathcal{A}_c$ into three subsets $\mathcal{A}_c^j = \{x \in \mathcal{A}_c ~|~ \forall x' \in \mathcal{A}_{j-1,j} \cup \mathcal{A}_{j,j+1}, xx' \in E \text{ and } \forall x' \in \mathcal{A}_{j+1,j+2}, xx' \not \in E\}$ ($j \in \{1,2,3\}$) which must be cliques since $G$ does not contain an induced claw. But then either $\mathcal{A}_c$ is a clique, in which case $G$ is $B_1$-EPG (see Fig. \ref{cas4}), or there exists $x \in \mathcal{A}_c^j$ and $x' \in \mathcal{A}_c^{j+1}$ such that $xx' \not \in E$, and $G$ contains a 4-wheel with $x,x'',x',x_{j+2}$ (for some $x'' \in \mathcal{A}_{j,j+1}$) as its 4-cycle and $x_j$ as its center, which is contrary to our assumption.   

\begin{center}
\begin{figure}[h]
\begin{minipage}[b]{0.5\textwidth}
\begin{subfigure}[b]{\linewidth}
\centering
\begin{tikzpicture}[node distance=.8cm]
\node[clique] (Ac3) {$\mathcal{A}_{c}^3$};
\node[clique] (Ac2) [below right of=Ac3] {$\mathcal{A}_{c}^2$};
\node[clique] (Ac1) [below left of=Ac3] {$\mathcal{A}_{c}^1$};
\node[circ,label=left:{\tiny $x_1$}] (a1) [below left of=Ac1,xshift=-0.5cm] {};
\node[circ,label=right:{\tiny $x_2$}] (a2) [below right of=Ac2,xshift=0.5cm] {};
\node[circ,label=above:{\tiny $x_3$}] (a3) [above of=Ac3] {};
\node[clique] (A12) [below right of=Ac1,yshift=-1cm] {$\mathcal{A}_{1,2}$};
\node[clique] (A23) [above right of=Ac3,xshift=1cm,yshift=0.5cm] {$\mathcal{A}_{2,3}$};
\node[clique] (A31) [above left of=Ac3,xshift=-1cm,yshift=0.5cm] {$\mathcal{A}_{3,1}$};

\draw (Ac1) edge[-] (a1)
(Ac1) edge[-] (a2)
(Ac1) edge[-] (a3)
(Ac1) edge[-] (Ac2)
(Ac1) edge[-] (Ac3)
(Ac1) edge[-] (A31)
(Ac1) edge[-] (A12)
(Ac2) edge[-] (a1)
(Ac2) edge[-] (a2)
(Ac2) edge[-] (a3)
(Ac2) edge[-] (Ac3)
(Ac2) edge[-] (A23)
(Ac2) edge[-] (A12)
(Ac3) edge[-,bend right=15] (a1)
(Ac3) edge[-,bend left=15] (a2)
(Ac3) edge[-] (a3)
(Ac3) edge[-] (A23)
(Ac3) edge[-] (A31)
(a1) edge[-] (a2)
(a1) edge[-] (a3)
(a1) edge[-] (A31)
(a1) edge[-] (A12)
(a2) edge[-] (a3)
(a2) edge[-] (A12)
(a2) edge[-] (A23)
(a3) edge[-] (A23)
(a3) edge[-] (A31);
%(A31) edge[dashed,bend left=20] (A23)
%(A12) edge[dashed,bend left=20] (A31)
%(A12) edge[dashed,bend right=20] (A23);
\end{tikzpicture}
\caption{General structure of $G$.}
\end{subfigure}
\end{minipage}
\begin{minipage}[b]{0.5\textwidth}
\begin{subfigure}[b]{\linewidth}
\centering
\begin{tikzpicture}
\coordinate (e) at (2,0);
\coordinate (n) at (0,2);
\coordinate (o) at (-2,0);
\node[invisible] (fake) at (0,-1.5) {};
\node[invisible] (x1) at ($(e) + (0.6,0)$) {$\mathcal{P}_{x_1}$};
\node[invisible] (Ac1) at ($(o) + (-0.6,-0.15)$) {$\mathcal{A}_c^1$};
\node[invisible] (A12) at ($(e) + (0.9,0.15)$) {$\mathcal{A}_{1,2}$};
\node[invisible] (x2) at ($(e) + (0.35,0.3)$) {$\mathcal{P}_{x_2}$};
\node[invisible] (Ac2) at ($(e) + (-.1,0.44)$) {$\mathcal{A}_c^2$};
\node[invisible] (A31) at ($(o) + (-0.6,0.15)$) {$\mathcal{A}_{3,1}$};
\node[invisible] (x3) at ($(o) + (-0.3,0.3)$) {$\mathcal{P}_{x_3}$};
\node[invisible] (Ac3) at ($(o) + (0,0.45)$) {$\mathcal{A}_c^3$};
\node[invisible,label=right:\tiny $\mathcal{A}_{2,3}$] (A23) at ($(n) + (0,0.3)$) {};

\draw[very thick] (o) -- (x1); %x1
\draw[thick] (Ac1) -- ($(e) + (0,-0.15)$); %Ac1
\draw[thick] (A31) -- ($(o) + (1,0.15)$); %A31
\draw[thick] (A12) -- ($(e) + (-1,0.15)$); %A12
\draw[thick] (A23) -- ($(n) + (0,-1)$); %A23
\draw (x3) -- ($(o) + (1.85,0.3)$) -- ($(n) + (-0.15,0)$); %x4
\draw (x2) -- ($(e) + (-1.85,0.3)$) -- ($(n) + (0.15,0)$); %x2
\draw[thick] (Ac3) -- ($(o) + (1.7,0.45)$) -- ($(n) + (-0.3,0)$); %Ac3
\draw[thick] (Ac2) -- ($(e) + (-1.7,0.45)$) -- ($(n) + (0.3,0)$); %Ac2
\end{tikzpicture}
\caption{A $B_1$-EPG representation of $G$.}
\end{subfigure}
\end{minipage}
\caption{$G$ does not contain $W_4$, all three $\mathcal{A}_{j,j+1}$ are nonempty and $\mathcal{A}_c$ is a clique.}
\label{cas4}
\end{figure}
\end{center}
\end{case}
If we now assume that at least one of the subsets $\mathcal{A}_{j,j+1}$ is empty, then at least one vertex $x_i$ of $C$ is adjacent to every vertex of $G$ and $G$ is consequently an interval graph. Indeed, assume that $G\backslash x_i$ contains an induced cycle $C' = y_1, \cdots, y_l$ with $l>3$. Then, together with $x_i$, it would form an $l$-wheel. But since $G$ is proper, it cannot contain a $k$-wheel with $k > 4$. Hence $l = 4$ and $G$ would contain a 4-wheel which is contrary to our assumption. Therefore, $G$ has no induced cycle of length larger than 3, i.e. $G$ is chordal. Furthermore, if there existed three pairwise nonadjacent vertices in $G\backslash x_i$, then together with $x_i$ they would induce a claw. Hence, $G\backslash x_i$ contains no asteroidal triple, i.e. $G$ is an interval graph (see Theorem \ref{theo:interval}), and therefore $B_1$-EPG.
\end{proof}

From the characterisation of $B_1$-EPR graphs, i.e. intersection graphs of paths on a rectangle of a grid where each path has at most one bend, given in \cite{NCA}, we deduce the following characterisation by a family of minimal forbidden induced subgraphs of PCA graphs which are $B_1$-EPR. It is easily seen that the class of circular arc graphs is exactly the class $B_4$-EPR. The authors of \cite{NCA} further proved that NCA graphs have a bend number, with respect to EPR representations, of 2; hence, since PCA $\subset$ NCA, PCA graphs also have a bend number, with respect to EPR representations, of at most 2. 

\begin{corollary}
Let $G$ be a PCA graph. Then $G$ is $B_1$-EPR if and only if $G$ is $\{W_4, S_3, C_{4k-1}^k, k \geq 2\}$-free.
\end{corollary}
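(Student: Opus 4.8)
The plan is to obtain the corollary by combining two results from the literature: the characterisation of $B_1$-EPR graphs in \cite{NCA} and the forbidden--induced--subgraph characterisation of proper Helly circular arc graphs in \cite{PHCA}. I would start from the identity already used in the proof of Theorem \ref{thmPCA}, namely that (by \cite{NCA}) the $B_1$-EPR graphs are exactly the NHCA graphs containing no induced $C_{4k-1}^k$ with $k\ge 2$. Hence, for a PCA graph $G$, membership in $B_1$-EPR is equivalent to being NHCA and $\{C_{4k-1}^k : k\ge 2\}$-free, and the whole statement reduces to the claim that \emph{a PCA graph is NHCA if and only if it is $\{W_4,S_3\}$-free}; intersecting this equivalence with the class of $\{C_{4k-1}^k\}_{k\ge 2}$-free graphs then yields the corollary.

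For the forward direction of that equivalence, observe that $W_4$ and $S_3$ are themselves PCA graphs --- neither occurs in the list of Theorem \ref{PCA} (indeed $S_3$, although not an interval graph, admits a proper normal circular arc representation) --- but that neither is NHCA. For $S_3$ this can be seen directly: in a representation in which no three arcs cover $\mathcal{C}$, the clique $\{k_0,k_1,k_2\}$ has a common point $p$, so the complementary arcs $A_i:=\mathcal{C}\setminus K_i$ all avoid $p$ and their union is not all of $\mathcal{C}$; since each $s_i$ misses exactly one vertex of the clique, the triangle $\{k_i,k_{i+1},s_i\}$ having a common point forces the arc $A_{i+2}$ to contain a point outside $A_i\cup A_{i+1}$, and an elementary extent--counting argument shows that three arcs sharing a common point cannot each have a point outside the union of the other two --- a contradiction. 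The case of $W_4$ is similar: in any representation the four arcs of the $4$-cycle cover $\mathcal{C}$ (cutting the circle at an uncovered point would display $C_4$ as an interval graph), and then the arc of the centre together with two of these already covers $\mathcal{C}$. (Both non-memberships are also contained in \cite{NCA}, where $W_4$ and $S_3$ appear among the minimal obstructions for $B_1$-EPR.) As the class NHCA is closed under taking induced subgraphs, every NHCA PCA graph is therefore $\{W_4,S_3\}$-free. For the converse, if $G$ is a PCA graph with no induced $W_4$ and no induced $S_3$, then $G$ is PHCA by the characterisation in \cite{PHCA}, and hence NHCA by \cite{sNHCA}.

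Combining the two directions: a PCA graph $G$ is $B_1$-EPR if and only if it is NHCA and $\{C_{4k-1}^k\}_{k\ge 2}$-free, equivalently if and only if it is $\{W_4,S_3\}$-free and $\{C_{4k-1}^k\}_{k\ge 2}$-free, that is, if and only if $G$ is $\{W_4,S_3,C_{4k-1}^k : k\ge 2\}$-free. That the members of this family are pairwise incomparable and minimally non-$B_1$-EPR within PCA then follows from \cite{NCA} for the graphs $C_{4k-1}^k$ and is an elementary direct check for $W_4$ and $S_3$ (deleting any vertex from either leaves a PCA graph with no induced $W_4$ or $S_3$, hence $B_1$-EPR by the equivalence just proved).

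The logical skeleton above is short, so the real content lies in correctly assembling the external ingredients. The step I expect to be the main obstacle is the assertion that $W_4$ and $S_3$ are not NHCA: this is precisely what prevents these two graphs from being redundant entries in the forbidden family, so the extent--counting argument for $S_3$ sketched above (and its analogue for $W_4$) should be written out carefully, unless one prefers to quote it from \cite{NCA}. A secondary verification is that the forbidden family for $B_1$-EPR supplied by \cite{NCA}, once intersected with the class of PCA graphs, leaves nothing beyond $W_4$, $S_3$ and the graphs $C_{4k-1}^k$ ($k\ge 2$), which is guaranteed by the reduction ``PCA $\cap$ NHCA $=$ PCA $\cap\,\{W_4,S_3\}$-free''.
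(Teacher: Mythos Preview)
Your proposal is correct and follows essentially the same route as the paper: both directions hinge on the identity from \cite{NCA} that $B_1$-EPR $=$ NHCA $\cap \{C_{4k-1}^k\}_{k\ge 2}$-free, together with the characterisation PCA $\cap\,\{W_4,S_3\}$-free $=$ PHCA from \cite{PHCA} and the inclusion PHCA $\subset$ NHCA. The only cosmetic difference is in the forward direction: the paper simply cites that $W_4$, $S_3$ and $C_{4k-1}^k$ are individually non-$B_1$-EPR (quoting \cite{Ries}, \cite{Golumbic}, \cite{NCA} respectively), whereas you route this through the intermediate claim that $W_4$ and $S_3$ are not NHCA --- your sketchy ``extent--counting'' argument for $S_3$ is not needed once you are willing to cite, and the minimality paragraph at the end is extra to what the corollary asserts.
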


\begin{proof}
It was shown in \cite{NCA} that for $k \geq 2$, $C_{4k-1}^k \not \in B_1$-EPG; a fortiori, $C_{4k-1}^k \not \in B_1$-EPR. We also know from \cite{Ries} that $W_4 \not \in B_1$-EPR and from \cite{Golumbic} that $S_3 \not \in B_1$-EPR.

Conversely, in \cite{PHCA} the authors proved that PCA $\cap$ $\{W_4,S_3\}$-free $=$ PHCA. The result then follows from the fact that PHCA $\cap$ $\{C_{4k-1}^k, k \geq 2\}$-free $\subset$ NHCA $\cap$ $\{C_{4k-1}^k, k \geq 2\}$-free $=$ $B_1$-EPR (see~\cite{NCA}).
\end{proof} 

%==================================================================
\section{Conclusion}
\label{sec:conclusion}

In this paper, we present characterisations by (infinite) families of minimal forbidden induced subgraphs for $B_1$-EPG $\cap$ PCA and $B_1$-EPR $\cap$ PCA. This is a first step towards finding a characterisation of the minimal graphs in (CA $\cap ~B_2$-EPG) $\backslash$ (CA $\cap ~B_1$-EPG), a question left open in \cite{NCA}. 

\section*{Acknowledgments}

This research was carried out when Dr M.P. Mazzoleni was visiting the University of Fribourg. The support of this institution is gratefully acknowledged.

%==================================================================

\end{document}